\let\doendproof\endproof
\renewcommand{\endproof}{\hfill$\qed$\doendproof}
\definecolor{darkgreen}{RGB}{0,100,0}
\definecolor{darkblue}{RGB}{0,0,139}
\definecolor{darkred}{RGB}{139,0,0}
\definecolor{darkmagenta}{RGB}{139,0,139}
\definecolor{darkorange}{RGB}{255,140,0}
\title{Multilevel Planarity}
\author{Lukas Barth
  \and Guido Br\"uckner
  \and Paul Jungeblut
  \and Marcel Radermacher}
\institute{Department of Computer Science, Karlsruhe  Institute of Technology,
	Germany	\email{lukas.barth@kit.edu}, \email{brueckner@kit.edu}, \email{paul.jungeblut@student.kit.edu}, \email{radermacher@kit.edu}
}
\date{}
\begin{document}

\maketitle

\begin{abstract}
  In this paper, we introduce and study the \emph{multilevel-planari\-ty testing} problem, which 
  is a generalization of upward planarity and level planarity.
  Let~$G = (V, E)$ be a directed graph and let~$\ell: V \to \mathcal P(\mathbb Z)$ be a function that assigns a finite set of integers to each vertex.
  A~multilevel-planar drawing of~$G$ is a planar drawing of~$G$ such that the~$y$-coordinate of each vertex~$v \in V$ is~$y(v) \in \ell(v)$, and each edge is drawn as a strictly~$y$-monotone curve.

  We present linear-time algorithms for testing multilevel planarity of embedded graphs with a single source and of oriented cycles.
  Complementing these algorithmic results, we show that multilevel-planarity testing is~\textsf{NP}-complete even in very restricted cases.
\end{abstract}

\section{Introduction}
Testing a given graph for planarity, and, if it is planar, finding a planar embedding, are classic algorithmic problems.
However, one is often not interested in just any planar embedding, but in one that has some additional properties.
Examples of such properties include that a given existing partial drawing should be extended~\cite{Angelini2015,Jelinek2013} or that some parts of the graph should appear clustered together~\cite{DiBattista2008,Jelinkova2009}.

There also exist notions of planarity specifically tailored to directed graphs.
An \emph{upward-planar drawing} is a planar drawing where each edge is drawn as a strictly~$y$-monotone curve.
While testing upward planarity of a graph is an \textsf{NP}-complete problem in general~\cite{Garg2002}, efficient algorithms are known for single-source graphs and for embedded graphs~\cite{Bertolazzi1994,Bertolazzi1998}.
One notable constrained version of upward planarity is that of level planarity.
A \emph{level graph} is a directed graph~$G = (V, E)$ together with a level assignment~$\gamma : V \to \mathbb{Z}$ that assigns an integer level to each vertex and satisfies~$\gamma(u) < \gamma(v)$ for all~$(u, v) \in E$.
A drawing of~$G$ is level planar if it is upward planar, and we have~$y(v) = \gamma(v)$ for the~$y$-coordinate of each vertex~$v \in V$.
Level-planarity testing and embedding is feasible in linear time~\cite{Juenger1999}.
There exist further level-planarity variants on the cylinder and on the torus~\cite{Angelini2016,Bachmaier2005} and there has been considerable research on further-constrained versions of level planarity.
Examples include ordering the vertices on each level according to so-called constraint trees~\cite{Angelini2015-proper,Harrigan2008}, clustered level planarity~\cite{Angelini2015-proper,Forster2004}, partial level planarity~\cite{Brueckner2017} and ordered level planarity~\cite{Klemz2017}.

\paragraph{Contribution and Outline.}

In this paper, we introduce and study the \emph{multilevel-planarity testing} problem.
Let~$\mathcal P(\mathbb Z)$ denote the power set of integers.
The input of the multilevel-planarity testing problem consists of a directed graph~$G = (V, E)$ together with a function~$\ell: V \to \mathcal P(\mathbb Z)$, called a multilevel assignment, which assigns admissible levels, represented as a set of integers, to each vertex.
A~multilevel-planar drawing of~$G$ is a planar drawing of~$G$ such that for the~$y$-coordinate of each vertex~$v \in V$ it holds that~$y(v) \in \ell(v)$, and each edge is drawn as a strictly~$y$-monotone curve.
We start by discussing some preliminaries, including the relationship between multilevel planarity and existing planarity variants in Section~\ref{sec:preliminaries}.
Then, we present linear-time algorithms that test multilevel planarity of embedded single-source graphs and of oriented cycles with multiple sources in Sections~\ref{sec:embeddedSingleSourceGraphs} and~\ref{sec:cycles}, respectively.
In Section~\ref{sec:hardnessResults}, we complement these algorithmic results by showing that multilevel-planarity testing is \textsf{NP}-complete for abstract single-source graphs and for embedded multi-source graphs where it is~$\lvert\ell(v)\rvert \le 2$ for all~$v \in V$.
We finish with some concluding remarks in Section~\ref{sec:conclusion}.

\section{Preliminaries}
\label{sec:preliminaries}

This section consists of three parts.
First, we introduce basic terminology and notation.
Second, we discuss the relationship between multilevel planarity and existing planarity variants for directed graphs.
Third, we define a normal form for multilevel assignments, which simplifies the arguments in Sections~\ref{sec:embeddedSingleSourceGraphs} and~\ref{sec:cycles}.

\paragraph{Basic Terminology.}

Let~$G = (V, E)$ be a directed graph.
We use the terms \emph{drawing}, \emph{planar}, (\emph{combinatorial}) \emph{embedding} and \emph{face} as defined by Di Battista et al.~\cite{DiBattista1998Book}.
We say that two drawings are \emph{homeomorphic} if they respect the same combinatorial embedding.
A~\emph{multilevel assignment}~$\ell : V \to \mathcal P(\mathbb Z)$ assigns a finite set of possible integer levels to each vertex.
An upward-planar drawing is \emph{multilevel planar} if~$y(v) \in \ell(v)$ for all~$v \in V$.
Note that any finite set of integers can be represented as a finite list of finite integer intervals.
We choose this representation to be able to represent sets of integers that contain large intervals of numbers more efficiently.

A vertex of a directed graph with no incoming (outgoing) edges is a \emph{source} (\emph{sink}).
A directed, acyclic and planar graph with a single source~$s$ is an \emph{$sT$-graph}.
An~$sT$-graph with a single sink~$t$ and an edge~$(s,t)$ is an \emph{$st$-graph}.
In any upward-planar drawing of an~$st$-graph, the unique source and sink are the lowest and highest vertices, respectively, and both are incident to the outer face.
For a face~$f$ of a planar drawing, an incident vertex~$v$ is called \emph{source switch} (\emph{sink switch}) if all edges incident to~$f$ and~$v$ are outgoing (incoming).
Note that a source switch or sink switch does not need to be a source or sink in~$G$.
We will frequently add incoming edges to sources and outgoing edges to sinks during later constructions, referring to this as \emph{source canceling} and \emph{sink canceling}, respectively.
An \emph{oriented path} of length~$k$ is a sequence of vertices~$(v_1, v_2, \ldots, v_{k + 1})$ such that for all~$1 \leq i \leq k$ either the edge~$(v_i, v_{i + 1})$ or the edge~$(v_{i + 1}, v_i)$ exists.
A \emph{directed path} of length~$k$ is a sequence of vertices~$(v_1, v_2, \ldots, v_{k + 1})$ such that for all~$1 \leq i \leq k$ the edge~$(v_i, v_{i + 1})$ exists.
Let~$u, v \in V$ be two distinct vertices.
Vertex~$u$ is a \emph{descendant} of~$v$ in~$G$, if there exists a directed path from~$v$ to~$u$.
A \emph{topological ordering} is a function~$\tau : V \to \mathbb N$ such that for every~$v \in V$ and for each descendant~$u$ of~$v$ it is~$\tau(v) < \tau(u)$.

\paragraph{Relationship to Existing Planarity Variants.}

Multilevel-planarity testing is a generalization of level planarity.
To see this, let~$G = (V, E)$ be a directed graph together with a level assignment~$\gamma: V \to \mathbb{Z}$.
Define~$\ell(v) = \{\gamma(v)\}$ for all~$v \in V$.
It is readily observed that a drawing~$\Gamma$ of~$G$ is level planar with respect to~$\gamma$ if and only if~$\Gamma$ is multilevel planar with respect to~$\ell$.
Therefore, level planarity reduces to multilevel planarity in linear time.

Multilevel-planarity testing is also a generalization of upward planarity.
Without loss of generality, the vertices in an upward-planar drawing can be assigned integer~$y$-coordinates, and there is at least one vertex on each level.
Hence, upward planarity of~$G$ can be tested by setting~$\ell(v) = [1, \lvert V \rvert]$ for all~$v \in V$ and testing the multilevel planarity of~$G$ with respect to~$\ell$.
Therefore, upward planarity reduces to multilevel planarity in linear time.
By then restricting the multilevel assignment, multilevel planarity can also be seen as a constrained version of upward planarity.
Garg and Tamassia~\cite{Garg2002} showed the \textsf{NP}-completeness of upward-planarity testing, which directly gives the following.%
\begin{theorem}
  \label{thm:multiLevelPlanarityNPComplete}
  Multilevel-planarity testing is \textsf{NP}-complete.
\end{theorem}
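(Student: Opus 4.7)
The plan is to observe that the statement follows essentially for free from the reduction already sketched in the preceding paragraph, together with a short argument that the problem lies in \textsf{NP}. The main work has already been done in the paper: by setting $\ell(v) = [1, \lvert V\rvert]$ for every vertex $v$, an instance of upward-planarity testing is transformed in linear time into an equivalent instance of multilevel-planarity testing. Since upward-planarity testing is \textsf{NP}-hard by Garg and Tamassia~\cite{Garg2002}, this reduction immediately yields \textsf{NP}-hardness of multilevel-planarity testing. I would state this explicitly as the first half of the proof, citing the reduction discussed above.

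For membership in \textsf{NP}, I would exhibit a polynomial-size certificate and a polynomial-time verifier. A natural certificate is a combinatorial embedding of $G$ together with an integer $y$-coordinate $y(v) \in \ell(v)$ for every vertex $v \in V$. Note that the $y$-coordinates can be restricted to a polynomial range (e.g. $\{1, \ldots, \lvert V \rvert\}$ after re-indexing), so the certificate has polynomial size even though $\ell(v)$ may in principle be represented as a union of intervals. The verifier checks in linear time that (i) $y(v) \in \ell(v)$ for every $v$, (ii) $y(u) < y(v)$ for every edge $(u,v)$, and (iii) the combinatorial embedding is planar and consistent with a $y$-monotone drawing realising these coordinates; the latter can be checked, for instance, by verifying that in each face the incident vertices split into exactly one ascending and one descending chain between the lowest and highest incident vertex.

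Combining hardness and membership gives \textsf{NP}-completeness. I do not expect any real obstacles: the reduction is already spelled out in the paragraph preceding the theorem, and the \textsf{NP} membership argument is routine once one observes that $y$-coordinates can be taken from a polynomial range. The only subtlety worth mentioning explicitly is that the input encoding of $\ell$ as a list of intervals could a priori allow exponentially large level values, which is why the certificate should use a normalised set of coordinates rather than the raw values from $\ell(v)$.
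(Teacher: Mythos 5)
Your proposal takes essentially the same route as the paper: the paper's entire proof is the observation that the linear-time reduction from upward-planarity testing (setting $\ell(v) = [1,\lvert V\rvert]$) combined with the \textsf{NP}-completeness result of Garg and Tamassia directly yields the theorem, and your hardness argument is exactly that. Your additional \textsf{NP}-membership argument (which the paper leaves implicit) is sound in spirit, but your check (iii) is too strong: in an upward-planar drawing a face boundary may have several local maxima and minima (multiple sink and source switches), so it need not decompose into a single ascending and a single descending chain; this is harmless, though, since upward planarity of a graph with a fixed embedding and prescribed $y$-coordinates can be verified in polynomial time by other means, e.g., via the angle-assignment characterization of Bertolazzi et al.
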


\paragraph{Multilevel Assignment Normal Form.}

A multilevel assignment~$\ell$ has \emph{normal form} if for all~$(u, v) \in E$ it holds that~$\min\ell(u) < \min\ell(v)$ and~$\max\ell(u) < \max\ell(v)$.
Some proofs are easier to follow for multilevel assignments in normal form.
The following lemma justifies that we may assume without loss of generality that~$\ell$ has normal form.%
\begin{lemma}
  \label{lem:multiLevelAssignmentToNormalForm}
  Let~$G = (V, E)$ be a directed graph together with a multilevel assignment~$\ell$.
  Then there exists a multilevel assignment~$\ell'$ in normal form such that any drawing of~$G$ is multilevel planar with respect to~$\ell$ if and only if it is multilevel planar with respect to~$\ell'$.
  Moreover,~$\ell'$ can be computed in linear time.
\end{lemma}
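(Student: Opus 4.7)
My plan is to obtain $\ell'$ from $\ell$ by repeatedly removing levels that no multilevel-planar drawing of $G$ could ever use. For every edge $(u,v) \in E$ and every multilevel-planar drawing, the edge being strictly $y$-monotone forces $y(u) < y(v)$ with $y(u) \in \ell(u)$ and $y(v) \in \ell(v)$; hence any level $\lambda \in \ell(v)$ with $\lambda \le \min\ell(u)$ is infeasible, since $u$ would then need to sit strictly below $\lambda$ yet at a level that is at least $\min\ell(u) \ge \lambda$. Symmetrically, any level $\lambda \in \ell(u)$ with $\lambda \ge \max\ell(v)$ is infeasible. Repeatedly deleting such levels terminates (the universe of affected levels is finite) and leaves, for every edge $(u,v)$, exactly the normal-form conditions $\min\ell(u) < \min\ell(v)$ and $\max\ell(u) < \max\ell(v)$.

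For correctness I plan a short induction on the deletion sequence: any drawing that is multilevel planar with respect to the current assignment is also multilevel planar with respect to the next one, because the levels just removed are provably not used by any such drawing. In the limit this yields the desired equivalence between $\ell$ and $\ell'$; the converse direction is immediate from $\ell'(v) \subseteq \ell(v)$. If some $\ell'(v)$ becomes empty along the way, then no multilevel-planar drawing exists with respect to $\ell$, and I would output the all-empty assignment, which admits no drawing and satisfies the normal-form conditions vacuously.

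The step I expect to be the main obstacle is achieving linear running time, because a naive fixed-point iteration could revisit each set many times. I plan to reorganize the deletions into two directed sweeps. In the first, I process vertices in topological order and, for each $v$ and each incoming edge $(u,v)$, remove all levels of $\ell(v)$ that are $\le \min\ell(u)$; when $v$ is processed, the $\min$-values of all its predecessors are already final, so this single pass establishes $\min\ell(u) < \min\ell(v)$ at every edge. In the second sweep I process vertices in reverse topological order and symmetrically trim the top of each $\ell(v)$ using the $\max$-values of its successors. Since this sweep only removes levels from the top of each set, it leaves all $\min$-values (and hence the invariants of the first sweep) untouched as long as no set is emptied. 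Each deletion affects only a prefix or a suffix of the interval-list representation of the affected $\ell(v)$ and can be charged either to an interval that is removed or to constant-per-edge bookkeeping, yielding a total running time linear in $|V|$, $|E|$, and the total number of intervals representing $\ell$.
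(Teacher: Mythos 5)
Your proposal is correct and follows essentially the same route as the paper: two sweeps in topological (respectively reverse topological) order that trim each $\ell(v)$ from below using the minima of its predecessors and from above using the maxima of its successors, justified by the observation that strictly $y$-monotone edges make the removed levels unusable in any multilevel-planar drawing, with linear time obtained by charging each deleted level (or interval) at most once. The paper phrases this as computing explicit bounds $l_v,u_v$ and setting $\ell'(v)=\ell(v)\cap[l_v,u_v]$ rather than as an iterated deletion reorganized into sweeps, but the argument is the same.
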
%
\begin{proof}
  The idea is to convert~$\ell(v)$ into~$\ell'(v) \subseteq \ell(v)$ for all~$v \in V$ by finding a lower bound~$l_v$ and an upper bound~$u_v$ for the level of~$v$, and then setting~$\ell'(v) = \ell(v) \cap [l_v, u_v]$.
  To find the lower bound, iterate over the vertices in increasing order with respect to some topological ordering~$\tau$ of~$G$.
  Because all edges have to be drawn as strictly~$y$-monotone curves, for each vertex~$v \in V$ it must be~$y(v) > \max_{(w, v) \in E} l_w$.
  So, set~$l_v = \max(\min\ell(v), \max_{(w, v) \in E} l_w + 1)$.
  Analogously, to find the upper bound, iterate over~$V$ in decreasing order with respect to~$\tau$.
  Again, because edges are drawn as strictly~$y$-monotone curves, for each vertex~$v \in V$ it must hold true that~$y(v) < \min_{(v, w) \in E} u_w$.
  Therefore, set~$u_v = \min(\max\ell(v), \min_{(v, w) \in E} u_w - 1)$.
  This means that in any multilevel-planar drawing of~$G$ the~$y$-coordinate of~$v \in V$ is~$y(v) \in \ell(v) \cap [l_v, u_v]$.
  So it follows that a drawing of~$G$ is multilevel planar with respect to~$\ell$ if and only if it is multilevel planar with respect to~$\ell'$.

  To see that the running time is linear, note that a topological ordering of~$G$ can be computed in linear time and every vertex and edge is handled at most twice during the procedure described above.
  Because every level candidate in~$\ell$ is removed at most once, the total running time is~$\mathcal O(n + \sum_{v \in V} \lvert\ell(v)\rvert)$, i.e., linear in the size of the input.
\end{proof}

\section{Embedded~$sT$-Graphs}
\label{sec:embeddedSingleSourceGraphs}

In this section, we characterize multilevel-planar $sT$-graphs as subgraphs of certain planar~$st$-graphs.
Similar characterizations exist for upward planarity and level planarity~\cite{DiBattista1988,Leipert1998}.
The idea behind our characterization is that for any given multilevel-planar drawing, we can find a set of edges that can be inserted without rendering the drawing invalid, and which make the underlying graph an~$st$-graph.
Thus, the graph must have been a subgraph of an~$st$-graph.
This technique is similar to the one found by Bertolazzi et al.~\cite{Bertolazzi1998}, and in fact is built on top of it.

To use this characterization for multilevel-planarity testing, we cannot require a multilevel-planar drawing to be given.
We show that if we choose the set of edges to be inserted carefully, the respective set of edges can be inserted into \emph{any} multilevel-planar drawing for a fixed combinatorial embedding.
An algorithm constructing such an edge set can therefore be used to test for multilevel planarity of embedded~$sT$-graphs, resulting in Theorem~\ref{thm:mlp-testing}.
The algorithm is constructive in the sense that it finds a multilevel-planar drawing, if it exists.
In Section~\ref{sec:hardnessResults}, we show that testing multilevel planarity of~$sT$-graphs without a fixed combinatorial embedding is \textsf{NP}-hard.
Recall that every multilevel-planar drawing is upward planar.
We now prove that the vertex with the largest~$y$-coordinate on the boundary of each face is the same across all homeomorphic drawings.%
\begin{lemma}
  \label{lem:uniqueSinkSwitch}
  Let~$G = (V, E)$ be a biconnected~$sT$-graph together with an upward-planar drawing~$\Gamma$.
  For each inner face~$f$ of~$\Gamma$ and each vertex~$v$ incident to~$f$, let~$\angle_{\Gamma, f}(v)$ denote the angle defined by the two edges incident to~$v$ and~$f$ in~$\Gamma$.
  Then the following properties hold:
  \begin{enumerate}
    \item There is exactly one sink switch~$t_f$ on the boundary of~$f$ with~$\angle_{\Gamma, f}(t_f) \le \pi$, namely the vertex with greatest~$y$-coordinate among all vertices incident to~$f$.
    \item Let~$\Gamma'$ be an upward-planar drawing of~$G$ that is homeomorphic to~$\Gamma$.
    Then the vertex~$t_f$ has the greatest~$y$-coordinate of all vertices incident to~$f$ in~$\Gamma'$.
  \end{enumerate}
\end{lemma}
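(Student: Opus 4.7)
The plan is to address part~1 by combining a direct geometric argument for existence with the classical angle-counting identity for inner faces of upward-planar drawings, and to derive part~2 from the fact that the small-versus-large angle classification of sink switches is determined by the combinatorial embedding alone. For existence in part~1, I take any vertex $t^*$ with the greatest $y$-coordinate among the vertices incident to $f$; neither edge of $f$ at $t^*$ can be outgoing, as its other endpoint would then be higher, so both are incoming, $t^*$ is a sink switch, and the fact that the rest of $f$ lies locally below $t^*$ gives $\angle_{\Gamma,f}(t^*)\le\pi$. For uniqueness, I invoke the standard identity $(A_f+B_f)-(C_f+D_f)=2$, valid for every inner face of an upward-planar drawing, where $A_f,B_f$ count source and sink switches of $f$ with angle $\le\pi$ and $C_f,D_f$ those with angle $>\pi$; combined with $A_f+C_f=B_f+D_f$ (source and sink switches of $f$ are equinumerous, since along the closed boundary the oscillations of $y$ alternate between local minima and maxima), this yields $B_f-C_f=1$, so it suffices to show $C_f=0$.

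The key local observation is that in any upward-planar drawing the outgoing edges at a vertex $v$ leave $v$ into the open upper half-plane, while incoming edges leave into the open lower half-plane. Hence if $v$ has at least one incoming edge, the angular sector at $v$ bounded by any two consecutive outgoing edges lies entirely in the upper half-plane and has angle strictly less than $\pi$; in particular $v$ cannot be a large-angle source switch of any face. So the only candidate for a large-angle source switch is the unique source $s$ of $G$. Moreover, $s$ lies on the outer face of $\Gamma$, since it is the unique lowest vertex of $G$ and the half-plane directly below $s$ is free of the drawing and extends to infinity; the single large-angle source switch sector at $s$ is the one containing this lower half-plane, which is exactly the outer face sector. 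Consequently $C_f=0$ for every inner face, giving $B_f=1$, and together with existence, the unique small-angle sink switch of $f$ is the highest vertex $t_f$.

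For part~2, I argue that whether a sink switch $v$ of $f$ has small or large angle depends only on the combinatorial embedding. If $v$ has any outgoing edge, the half-plane argument shows that the sector of $f$ at $v$ lies in the lower half-plane and is strictly less than $\pi$. If $v$ is a sink of $G$, all edges at $v$ are incoming, and exactly one of its incident face sectors---the one cyclically containing the (edge-free) upper half-plane at $v$---has angle greater than $\pi$; this distinguished face is picked out purely by the cyclic order of the incoming edges in the embedding. Hence the unique small-angle sink switch of $f$ is the same vertex $t_f$ in every homeomorphic upward-planar drawing $\Gamma'$, and applying part~1 to $\Gamma'$ gives that $t_f$ has the greatest $y$-coordinate among all vertices incident to $f$ in $\Gamma'$. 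The main technical step in this plan is making the local half-plane observation precise: once established, it simultaneously powers the vanishing of $C_f$ in part~1 and the embedding-invariance of the classification in part~2, with everything else reducing to bookkeeping with the angle identity.
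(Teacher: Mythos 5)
Your part~1 is fine and is in fact more self-contained than the paper, which simply cites Bertolazzi et al.\ for that property; your angle-counting argument (the identity $(A_f+B_f)-(C_f+D_f)=2$ together with the alternation of source and sink switches, plus the observation that the unique source $s$ is the only possible large-angle source switch and its large angle lies in the outer face) is a correct route to $B_f=1$.

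Part~2, however, has a genuine gap at its only substantive step. For a sink $v$ of $G$ you assert that the face receiving the angle greater than $\pi$ at $v$ is ``picked out purely by the cyclic order of the incoming edges in the embedding.'' A cyclic order of the edges around $v$ does not distinguish any of its gaps: to know which consecutive pair of incoming edges brackets the upward ray you need the \emph{linear} left-to-right order of those edges, which is exactly the geometric information that could a priori differ between homeomorphic drawings. Indeed, the claim is false without the single-source hypothesis, which your argument never uses: take the oriented $4$-cycle with sources $a,b$ and sinks $v,w$ and edges $(a,v),(b,v),(a,w),(b,w)$; its embedding is unique, yet there are upward-planar drawings in which the large angle at $v$ lies in the inner face (when $w$ is the top of the inner face) and others in which it lies in the outer face (when $v$ is the top). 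So the sentence you wrote is precisely the statement to be proved, not a proof of it. The paper closes this gap using the single source: the two edges $(v_1,t_f),(v_2,t_f)$ bounding $f$ at $t_f$ are reached by directed paths $p_1,p_2$ from $s$, and their left-to-right order at $t_f$ is forced by the order of the outgoing edges at the last common vertex of $p_1$ and $p_2$, which is part of the combinatorial embedding; flipping the angle at $t_f$ from small to large would flip that order and hence change the embedding. You need this (or an equivalent use of the single-source and biconnectivity hypotheses) to make part~2 go through.
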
%
\begin{proof}
  The first property was observed by Bertolazzi et al.~\cite[page~138, fact~3]{Bertolazzi1998}.
  To prove the second property, assume that there exists an upward-planar drawing~$\Gamma'$ of~$G$ and a face~$f$ such that in~$\Gamma'$, vertex $t_f$ does not have the greatest~$y$-coordinate of all vertices incident to~$f$.
  Let~$e_1 = (v_1, t_f)$ and~$e_2 = (v_2, t_f)$ be the edges incident to~$f$ and~$t_f$.
  Figure~\ref{fig:uniqueSinkSwitch} shows exemplary drawings~$\Gamma$ and~$\Gamma'$.
  Because~$G$ has a single source~$s$, there exist directed paths~$p_1$ and~$p_2$ from~$s$ to~$v_1$ and~$v_2$, respectively.
  Then the left-to-right order of the edges~$e_1$ and~$e_2$ in~$\Gamma$ and~$\Gamma'$ is determined by the order of the outgoing edges at the last common vertex~$c$ on~$p_1$ and~$p_2$.
  Let~$t' \neq t_f$ be the vertex with greatest~$y$-coordinate of all vertices incident to~$f$ in $\Gamma'$.
  Then it holds that~$\angle_{\Gamma', f}(t') \le \pi$ and from the first property it follows that~$\angle_{\Gamma', f}(t_f) > \pi$.
  Since~$\Gamma$ and~$\Gamma'$ have the same underlying combinatorial embedding, the clockwise cyclic walk around~$f$ is identical in both drawings.
  But because~$\angle_{\Gamma, f}(t_f) \le \pi$ and~$\angle_{\Gamma', f}(t_f) > \pi$, the order of the outgoing edges of~$c$ is different in~$\Gamma$ and~$\Gamma'$.
  Note that~$c$ either has an incoming edge or it is~$s = c$, in which case the edge~$(s, t)$ lies to the left, i.e., the cyclic order of the edges around~$c$ is different in~$\Gamma$ and~$\Gamma'$.
  Therefore,~$\Gamma$ and~$\Gamma'$ are not homeomorphic.
\end{proof}

Note  that the result of Lemma~\ref{lem:uniqueSinkSwitch} also holds for embedded $sT$-graphs that are not biconnected.
Obviously it holds for any biconnected component.
Any subgraph~$G'$ that does not belong to any biconnected component is an attached tree inside a face $f$ given by the combinatorial embedding.
If~$f$ is an inner face, the unique vertex~$t_f$ of that face with maximal $y$-coordinate must be higher than any vertex of~$G'$ in any upward planar drawing. 

\begin{figure}[t]
  \begin{minipage}[t]{.52\linewidth}
    \centering
    \includegraphics{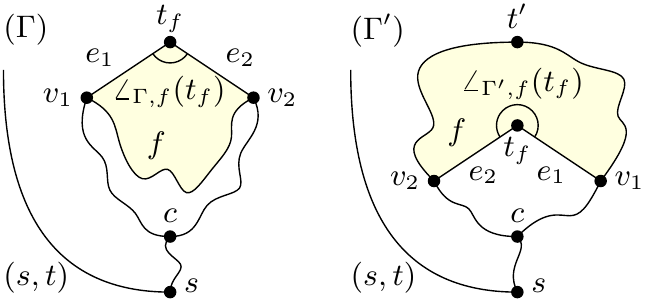}
    \caption{
      Proof of Lemma~\ref{lem:uniqueSinkSwitch}.
    }
    \label{fig:uniqueSinkSwitch}
  \end{minipage}%
  \hfill
  \begin{minipage}[t]{.4\linewidth}
    \centering
    \includegraphics{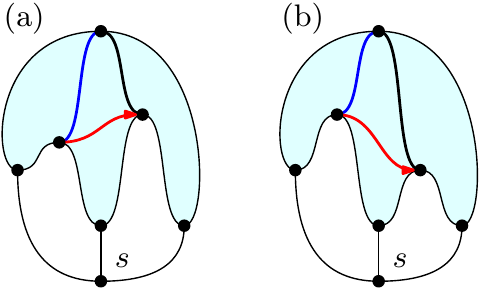}
    \caption{
      Not all edges are valid for the augmentation in Lemma~\ref{lem:stAugmentation}.
    }
    \label{fig:stAugmentation}
  \end{minipage}
\end{figure}

Bertolazzi et al.\ showed that any~$sT$-graph with an upward-planar embedding can be extended to an~$st$-graph with an upward-planar embedding that extends the original embedding~\cite{Bertolazzi1994,Bertolazzi1998}.
More formally, let~$G = (V, E)$ be an~$sT$-graph together with an upward-planar drawing~$\Gamma$.
Then there exists an~$st$-graph~$G_{st} = (V\,\dot\cup\,\{t\}, E\,\dot\cup\,E_{st})$ where~$t$ is the unique sink together with an upward-planar drawing~$\Gamma_{st}$ that extends~$\Gamma$.
Moreover,~$G_{st}$ and~$\Gamma_{st}$ can be computed in linear time.
Note that in general it is possible for a given~$E_{st}$ to choose an upward-planar drawing~$\Gamma$ of~$G$ so that the additional edges in~$E_{st}$ cannot be added into~$\Gamma$ as~$y$-monotone curves.
For an example, see Figure~\ref{fig:stAugmentation}, where augmenting with the red and black edge works only for the drawing shown in~(a), whereas augmenting with the blue and black edge works for both drawings.
In Lemma~\ref{lem:stAugmentation} we therefore show that there is a set~$E_{st}$ that can be added into any drawing with the same combinatorial embedding as~$\Gamma$.
In a way, this is the most general set~$E_{st}$.%
\begin{lemma}
  \label{lem:stAugmentation}
  Let~$G = (V, E)$ be a directed~$sT$-graph with a fixed combinatorial embedding.
  Then there exists an~$st$-graph~$G_{st} = (V\,\dot\cup\,\{t\}, E\,\dot\cup\,E_{st})$, where~$t$ is the unique sink, such that for any upward-planar drawing~$\Gamma$ of~$G$ there exists an upward-planar drawing~$\Gamma_{st}$ of~$G_{st}$ that extends~$\Gamma$.
  Moreover,~$G_{st}$ can be computed in linear time.
\end{lemma}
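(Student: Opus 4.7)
The plan is to perform sink canceling in $G$ where the target of each added edge is determined by the combinatorial embedding alone, so that Lemma~\ref{lem:uniqueSinkSwitch} supplies the geometric guarantee needed to route that edge in every homeomorphic upward-planar drawing. First I would place a new vertex $t$ in the outer face of $G$. For each sink $v$ of $G$, all incident edges are incoming and approach $v$ strictly from below in any upward-planar drawing, so exactly one face $f_v$ incident to $v$ has interior angle greater than $\pi$ at $v$. This face is invariant under homeomorphism by an argument dual to Lemma~\ref{lem:uniqueSinkSwitch}, hence it is determined by the embedding alone. I then add the edge $(v, t)$ if $f_v$ is the outer face and $(v, t_{f_v})$ otherwise, where $t_{f_v}$ is the top switch of $f_v$ given by Lemma~\ref{lem:uniqueSinkSwitch}.

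Next I would verify that $G_{st}$ is indeed an $st$-graph. Every added edge leaves a former sink of $G$ and enters either $t$ or some $t_{f_v}$, so no new source appears, $s$ remains the unique source, every former sink of $G$ has gained an outgoing edge, and $t$ becomes the unique sink. Acyclicity holds because by property~2 of Lemma~\ref{lem:uniqueSinkSwitch} each added edge $(v, t_{f_v})$ satisfies $y(t_{f_v}) > y(v)$ in every upward-planar drawing of $G$, so all new edges respect a strict $y$-ordering. To extend a given upward-planar drawing $\Gamma$ of $G$ to an upward-planar drawing $\Gamma_{st}$ of $G_{st}$, I realize the edges ending at $t$ as $y$-monotone curves through the outer face, and each edge $(v, t_{f_v})$ as a $y$-monotone curve inside $f_v$: the sector of $f_v$ immediately above $v$ is nonempty by the choice of $f_v$, the two boundary paths of $f_v$ from its bottom switch to $t_{f_v}$ are themselves $y$-monotone, and a curve starting at $v$ into this upper sector and then tracking one of these boundary paths stays $y$-monotone inside $f_v$ until it reaches $t_{f_v}$.

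The main obstacle, in my view, is not the explicit routing but the fact that $E_{st}$ must be chosen once and for all to work for \emph{every} homeomorphic drawing $\Gamma$, as illustrated by Figure~\ref{fig:stAugmentation}. This is exactly what the combinatorial invariance of $t_f$ from Lemma~\ref{lem:uniqueSinkSwitch} buys us, and it is the reason for taking $t_{f_v}$ rather than some other vertex reachable from $v$ within $f_v$. Linear running time then follows because the upward-planar embedding data and all top switches $t_f$ can be computed in linear time by the techniques of Bertolazzi et al.~\cite{Bertolazzi1998}, $f_v$ is found locally at each sink $v$, and each sink is processed once.
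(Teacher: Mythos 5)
Your proposal is correct and rests on the same two pillars as the paper's proof: cancel every sink of $G$ into the invariant top switch $t_f$ of a face determined by the embedding alone (or into a fresh global sink $t$ for the outer face), with Lemma~\ref{lem:uniqueSinkSwitch} supplying the guarantee that this works in every homeomorphic drawing. The one procedural difference is how the face for each sink is chosen: the paper runs the algorithm of Bertolazzi et al.~\cite{Bertolazzi1998} once to obtain an initial drawing and a set $E_{st}$, reads off from that drawing which face each augmentation edge lies in, and then merely redirects each edge's upper endpoint to $t_f$ (or $t$); you instead construct $E_{st}$ from scratch by assigning each sink $v$ to its unique large-angle face $f_v$. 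Your route is slightly more self-contained but places one extra claim on you that the paper sidesteps, namely that $f_v$ is the same across all homeomorphic upward-planar drawings. You assert this ``by an argument dual to Lemma~\ref{lem:uniqueSinkSwitch},'' which is too vague as stated; it is, however, a direct consequence of that lemma rather than a dual statement: for every inner face $f$ incident to $v$, property~1 forces the angle of $v$ in $f$ to be small exactly when $v = t_f$ and large otherwise, and property~2 makes the predicate $v = t_f$ embedding-invariant, so the unique large-angle face (an inner face with $t_f \neq v$, or else the outer face) is forced in every drawing. Spelling that out closes the only real gap. Two further minor points: the paper's definition of an $st$-graph requires the edge $(s,t)$, which your construction does not add (it can always be inserted in the outer face since $s$ is the lowest vertex and $t$ the highest, both on the outer face); and your routing argument for $(v, t_{f_v})$ should note that the upward walk inside $f_v$ can only terminate at a small-angle sink switch of $f_v$, which by Lemma~\ref{lem:uniqueSinkSwitch} is unique and equals $t_{f_v}$.
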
%
\begin{proof}
  Start by finding an initial upward-planar drawing~$\Gamma_{init}$ of~$G$ in linear time using the algorithm due to Bertolazzi et al.~\cite{Bertolazzi1998}.
  The algorithm also outputs a matching~$st$-graph~$G_{st}$ together with an upward-planar drawing~$\Gamma_{st}$ that extends~$\Gamma_{init}$.
  Note that any edge~$e \in E_{st}$ is drawn within some face of~$\Gamma_{init}$.
  Because~$t$ is the only sink of~$G_{st}$, it must have the highest~$y$-coordinate among all vertices in every upward-planar drawing of~$G_{st}$.
  Therefore, changing all edges in~$E_{st}$ drawn within the outer face to have endpoint~$t$ ensures that these edges can be drawn within the outer face of any upward-planar drawing~$\Gamma$ of~$G$ as~$y$-monotone curves while preserving planarity.
  For any inner face~$f$, Lemma~\ref{lem:uniqueSinkSwitch} states that there is a unique~$t_f$ incident to~$f$ with greatest~$y$-coordinate in every upward-planar drawing of~$G$ homeomorphic to~$\Gamma$.
  So changing all edges in~$E_{st}$ that are drawn within~$f$ to have endpoint~$t_f$ ensures that these edges can be drawn within~$f$ in any upward-planar drawing~$\Gamma$ of~$G$ as~$y$-monotone curves while preserving planarity.
  By precomputing~$t_f$ for every face, this procedure handles every edge in~$E_{st}$ in constant time, which gives linear running time overall.
\end{proof}

We now have a set of edges that can be used to complete~$G$ into~$G_{st}$. If a multilevel-planar drawing for the given combinatorial embedding of~$G$ respecting~$\ell$ exists, then it must also exist for~$G_{st}$.
However, the property of~$\ell$ being in normal form might not be fulfilled anymore in~$G_{st}$ because of the added edges.
We therefore need to bring~$\ell$ into normal form~$\ell'$ again.
Lemma~\ref{lem:multiLevelAssignmentToNormalForm} tells us that this does not impact multilevel planarity.
We conclude that~$G$ is multilevel planar with respect to~$\ell$ if and only if~$G_{st}$ is multilevel planar with respect to~$\ell'$.
The final property we need is proved by Leipert~\cite[page~117, Theorem~5.1]{Leipert1998}, and described in an article by J\"unger and Leipert~\cite{Juenger1999}.%
\begin{lemma}
  \label{lem:stGraphLevelPlanar}
  Let~$G$ be an~$st$-graph together with a level assignment~$\gamma$.
  Then for any combinatorial embedding of~$G$ there exists a drawing of~$G$ with that embedding that is level planar with respect to~$\gamma$.
\end{lemma}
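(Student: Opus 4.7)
The plan is to reduce the claim to the existence of a visibility representation of $G$ whose vertex heights are prescribed by $\gamma$ and which realizes the given combinatorial embedding. Recall that a visibility representation draws each vertex as a horizontal segment and each edge as a vertical segment joining its two endpoint segments.

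First I would note that a level assignment of an $st$-graph is, by definition, a topological numbering: the inequality $\gamma(u) < \gamma(v)$ for every edge $(u,v) \in E$ is exactly the defining property of such a numbering. I would then invoke the classical theorem of Tamassia and Tollis (cf.\ also Rosenstiehl and Tarjan) which, for any $st$-graph with a fixed combinatorial embedding and any topological numbering of that graph, constructs a visibility representation realizing the embedding in which the $y$-coordinate of each vertex $v$ equals its prescribed number, here $\gamma(v)$.

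Next I would convert this visibility representation into a level-planar drawing of $G$ by contracting each horizontal vertex segment to a single point (e.g., its midpoint) and rerouting the incident edge endpoints by short, strictly $y$-monotone slivers. After this local reshaping, every vertex $v$ lies at height $\gamma(v)$, every edge is a strictly $y$-monotone curve, and the cyclic order of edges around each resulting vertex-point is inherited from the left-to-right order of attachments on the original segment, hence coincides with the prescribed embedding.

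I would expect the main obstacle to be the first step: justifying that the visibility representation can be forced to use $\gamma$ as its $y$-coordinates while simultaneously realizing the prescribed embedding. This is precisely what the flexibility of the Tamassia--Tollis construction---which is parameterized by topological numberings of both the primal and the dual $st$-graph---provides, so once the correct result is quoted, the remaining local reshaping argument is routine.
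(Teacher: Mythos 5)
Your proposal is correct, but it takes a different route from the paper, which in fact offers no proof of its own: the lemma is quoted from Leipert's thesis (Theorem~5.1 there) and the J\"unger--Leipert article, so the paper simply defers to that external construction. Your argument instead observes that a level assignment of an $st$-graph is a topological numbering and invokes the Tamassia--Tollis visibility-representation theorem, which is indeed parameterized by an arbitrary topological numbering of the primal (for the $y$-coordinates) and of the dual (for the $x$-coordinates) while realizing the prescribed embedding; the subsequent contraction of each horizontal vertex segment to a point, with nested strictly $y$-monotone slivers attaching the former edge feet, is the standard conversion to an upward/level-planar drawing and preserves both the rotation system and strict monotonicity. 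This buys a self-contained, constructive proof from classical machinery rather than a citation. One small point you share with the paper's statement rather than resolve: ``any combinatorial embedding'' must implicitly mean one whose outer face is incident to the edge $(s,t)$ (equivalently, to both $s$ and $t$), since otherwise no upward-planar, hence no level-planar, drawing with that embedding exists; the Tamassia--Tollis construction likewise requires $s$ and $t$ on the outer face, which the edge $(s,t)$ guarantees can be arranged. It would be worth making that assumption explicit when you quote the theorem.
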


\noindent If~$\ell'$ is in normal form, ~$\ell'(v) \neq \emptyset$ is a necessary and sufficient condition that there exists a level assignment~$\gamma: V \to \mathbb Z$ with~$\gamma(v) \in \ell'(v)$ for all~$v \in V$.
Setting~$\gamma(v) = \min \ell'(v)$ is one possible such level assignment.
Then~$G$ is level planar with respect to~$\gamma$ and therefore multilevel planar with respect to~$\ell$, resulting in the characterization of multilevel-planar $st$-graphs:%
\begin{corollary}
	\label{cor:stGraphMultiLevelPlanar}
	Let~$G$ be an~$st$-graph together with a multilevel assignment~$\ell$ in normal form.
	Then there exists a multilevel-planar drawing for any combinatorial embedding of~$G$ if and only if~$\ell(v) \neq \emptyset$ for all~$v$.
\end{corollary}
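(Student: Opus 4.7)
The plan is to prove the biconditional in two straightforward directions, relying on Lemma~\ref{lem:stGraphLevelPlanar} to handle the substantive direction. The forward direction $(\Rightarrow)$ is essentially immediate: any multilevel-planar drawing assigns each vertex $v$ a $y$-coordinate $y(v) \in \ell(v)$, which forces $\ell(v) \neq \emptyset$ for every $v \in V$. I would dispatch this in a single sentence.

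For the reverse direction $(\Leftarrow)$, I would construct a concrete level assignment from $\ell$ and appeal to Lemma~\ref{lem:stGraphLevelPlanar}. Specifically, assuming $\ell(v) \neq \emptyset$ for all $v$, define $\gamma : V \to \mathbb{Z}$ by $\gamma(v) = \min \ell(v)$. The key verification is that $\gamma$ is a valid level assignment, i.e., $\gamma(u) < \gamma(v)$ for every edge $(u, v) \in E$. But this is exactly the normal form condition $\min \ell(u) < \min \ell(v)$, which holds by hypothesis. Hence $\gamma$ is a valid level assignment for $G$.

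With $\gamma$ in hand and $G$ an $st$-graph, Lemma~\ref{lem:stGraphLevelPlanar} yields, for any combinatorial embedding of $G$, a drawing with that embedding which is level planar with respect to $\gamma$. Since $y(v) = \gamma(v) \in \ell(v)$ by construction, this drawing is in particular multilevel planar with respect to $\ell$, completing the proof.

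There is no real obstacle: all the work has been done in the preceding lemmas. The only point worth flagging is that the normal form of $\ell$ is precisely what allows the naive choice $\gamma(v) = \min \ell(v)$ to be a valid level assignment without any further compatibility check between vertices joined by an edge. Without normal form one would need Lemma~\ref{lem:multiLevelAssignmentToNormalForm} as a preprocessing step, but since the corollary already assumes normal form, the argument collapses to the two short paragraphs above.
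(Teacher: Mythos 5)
Your proof is correct and follows essentially the same route as the paper: the paper likewise observes that in normal form $\ell(v)\neq\emptyset$ is necessary and sufficient for a compatible level assignment, takes $\gamma(v)=\min\ell(v)$, and invokes Lemma~\ref{lem:stGraphLevelPlanar} to obtain a level-planar (hence multilevel-planar) drawing for any embedding. Your explicit check that normal form makes $\gamma(u)<\gamma(v)$ hold along every edge is exactly the point the paper leaves implicit.
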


\noindent For a constructive multilevel-planarity testing algorithm, we now first take the edge set computed by the algorithm by Bertolazzi et al.~\cite{Bertolazzi1998} and modify it using Lemma~\ref{lem:stAugmentation} to complete any~$sT$-graph to an~$st$-graph.
Note that for this step, we need a fixed combinatorial embedding to be given, as is required by Point 2 of Lemma~\ref{lem:uniqueSinkSwitch}.
Once arrived at an~$st$-graph, we only need to check the premise of Corollary~\ref{cor:stGraphMultiLevelPlanar}.
This concludes the testing algorithm:%
\begin{theorem}
  \label{thm:mlp-testing}
  Let~$G$ be an embedded~$sT$-graph with a multilevel assignment~$\ell$.
  Then it can be decided in linear time whether there exists a multilevel-planar drawing of~$G$ respecting that embedding.
  If so, such a drawing can be computed within the same running time.
\end{theorem}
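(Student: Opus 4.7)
The plan is to chain together the preceding lemmas into a reduction from multilevel planarity of the embedded $sT$-graph $G$ to multilevel planarity of an $st$-graph $G_{st}$, for which Corollary~\ref{cor:stGraphMultiLevelPlanar} gives a trivial $O(1)$-per-vertex condition. First I would apply Lemma~\ref{lem:multiLevelAssignmentToNormalForm} to $(G,\ell)$ to obtain an equivalent multilevel assignment $\ell'$ in normal form in linear time; by the lemma, a drawing of $G$ is multilevel planar with respect to $\ell$ if and only if it is multilevel planar with respect to $\ell'$, and in particular this step cannot change the answer to the testing problem.

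Next I would invoke Lemma~\ref{lem:stAugmentation} on the embedded $sT$-graph $G$ to compute, in linear time, an $st$-graph $G_{st} = (V\,\dot\cup\,\{t\},E\,\dot\cup\,E_{st})$ together with the combinatorial embedding that extends the given one. To connect this with the multilevel assignment, I would set $\ell'(t) = \{M\}$ for a suitably large integer $M$ (e.g., $M = |V|+1 + \max_{v\in V}\max\ell'(v)$) and then re-normalize using Lemma~\ref{lem:multiLevelAssignmentToNormalForm} to obtain $\ell''$. The key claim I would verify is that $G$ admits a multilevel-planar drawing respecting the given embedding with respect to $\ell'$ if and only if $G_{st}$ admits such a drawing with respect to $\ell''$. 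The ``only if'' direction is immediate from Lemma~\ref{lem:stAugmentation}: any multilevel-planar drawing of $G$ is in particular upward planar, so the edges of $E_{st}$ can be added as $y$-monotone curves (placing $t$ at level $M$ above every other vertex), yielding a multilevel-planar drawing of $G_{st}$. The ``if'' direction follows by deleting $t$ and $E_{st}$ from any multilevel-planar drawing of $G_{st}$, since $\ell''(v)\subseteq\ell'(v)$ for every $v\in V$ by construction of $\ell''$.

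Having reduced to $(G_{st},\ell'')$, I would apply Corollary~\ref{cor:stGraphMultiLevelPlanar}: since $\ell''$ is in normal form, $G_{st}$ is multilevel planar with respect to the prescribed embedding if and only if $\ell''(v)\neq\emptyset$ for every $v$, which can be checked in linear time. For the constructive part, if every $\ell''(v)$ is non-empty, I would set $\gamma(v) = \min\ell''(v)$, which yields a valid level assignment for $G_{st}$, and invoke Lemma~\ref{lem:stGraphLevelPlanar} to produce a level-planar, hence multilevel-planar, drawing of $G_{st}$ in the given embedding; restricting to $G$ gives the desired drawing. Each step runs in linear time, so the overall algorithm does too.

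The main obstacle I anticipate is the equivalence in the middle step: one must be careful that the augmentation produced by Lemma~\ref{lem:stAugmentation} together with the re-normalization of $\ell'$ to $\ell''$ does not spuriously eliminate levels that would have been usable for a multilevel-planar drawing of $G$. This is exactly where the strength of Lemma~\ref{lem:stAugmentation}, namely that the added edges can be inserted into \emph{every} upward-planar drawing compatible with the given embedding, is essential; without it, some multilevel-planar drawings of $G$ might not extend, and the reduction would lose completeness.
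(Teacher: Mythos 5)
Your proposal is correct and follows essentially the same route as the paper: augment the embedded $sT$-graph to an $st$-graph via Lemma~\ref{lem:stAugmentation}, re-normalize the multilevel assignment via Lemma~\ref{lem:multiLevelAssignmentToNormalForm}, test non-emptiness via Corollary~\ref{cor:stGraphMultiLevelPlanar}, and construct a drawing with $\gamma(v)=\min\ell''(v)$ and Lemma~\ref{lem:stGraphLevelPlanar}. You are in fact more explicit than the paper about assigning a level to the new sink $t$ and about both directions of the equivalence between $(G,\ell')$ and $(G_{st},\ell'')$, which the paper only sketches in the surrounding prose.
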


\noindent Our algorithm uses the fact that to augment~$sT$-graphs to~$st$-graphs, only edges connecting sinks to other vertices need to be inserted.
For graphs with multiple sources and multiple sinks, further edges connecting sources to other vertices need to be inserted.
The interactions that occur then are very complex:
In Section~\ref{sec:hardnessResults}, we show that deciding multilevel planarity is \textsf{NP}-complete for embedded multi-source graphs.
In the next section, we identify oriented cycles as a class of multi-source graphs for which multilevel planarity can be efficiently decided.

\section{Oriented Cycles}
\label{sec:cycles}

In this section, we present a constructive multilevel-planarity testing algorithm for oriented cycles, i.e., directed graphs whose underlying undirected graph is a simple cycle.
We start by giving a condition for when an oriented cycle~$G = (V, E)$ together with some level assignment~$\gamma$ admits a level-planar drawing.
This condition yields an algorithm for the multilevel-planar setting.

In this section,~$\gamma$ is always level assignment and~$\ell$ is always a multilevel assignment.
Define~$\max\gamma = \max\{\gamma(v) \mid v \in V\}$ and $\min\gamma = \min\{\gamma(v) \mid v \in V\}$.
Further set~$\max\ell = \max\{\max\ell(v) \mid v \in V\}$ and~$\min\ell = \min\{\min\ell(v) \mid v \in V\}$.
Let~$S_{\min} \subset V$ be sources with minimal level, i.e.,~$S_{\min} = \{v \in V \mid \gamma(v) = \min\gamma\}$, and let~$T_{\max} \subset V$ be the sinks with maximal level.
We call sources in~$S_{\min}$ \emph{minimal sources}, sinks in~$T_{\max}$ are \emph{maximal sinks}.
Two sinks~$t_1, t_2 \in T_{\max}$ are \emph{consecutive} if there is an oriented path between~$t_1$ and~$t_2$ that does not contain any vertex in~$S_{\min}$.
The set~$T_{\max}$ is \emph{consecutive} if all sinks in~$T_{\max}$ are pairwise consecutive.
We define consecutiveness for sources in~$S_{\min}$ analogously.
Because~$G$ is a cycle, consecutiveness of~$T_{\max}$ also means that~$S_{\min}$ is consecutive.
If both~$S_{\min}$ and~$T_{\max}$ are consecutive, we say that~$\gamma$ is \emph{separating}.%
\begin{lemma}
  \label{lem:cycleEquivalence}
  Let~$G$ be an oriented cycle with a level assignment~$\gamma$.
  Then~$G$ is level planar with respect to~$\gamma$ if and only if~$T_{\max}$ is consecutive.
\end{lemma}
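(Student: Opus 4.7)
I would prove both implications of the lemma separately.

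For sufficiency, suppose $T_{\max}$ is consecutive; since $G$ is a cycle, $S_{\min}$ is consecutive as well. My plan is to augment $G$ into an $st$-graph $G'$ by adding a super-source $s$ at level $\min\gamma - 1$ with an outgoing edge to each source of $G$, a super-sink $t$ at level $\max\gamma + 1$ with an incoming edge from each sink of $G$, and an edge $(s,t)$. The consecutivity of $S_{\min}$ (respectively $T_{\max}$) yields a planar combinatorial embedding for $G'$: the sources are consecutive along the boundary of one face of the cycle's natural embedding, so all the new edges out of $s$ can be routed inside that face, and similarly the edges into $t$ inside the other face. Extending $\gamma$ by $\gamma(s) = \min\gamma - 1$ and $\gamma(t) = \max\gamma + 1$ gives a level assignment for $G'$. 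By Lemma~\ref{lem:stGraphLevelPlanar}, $G'$ admits a level-planar drawing for this embedding, which restricts to one for $G$.

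For necessity, I would argue by contradiction. Suppose $G$ has a level-planar drawing $\Gamma$ but $T_{\max}$ is not consecutive. Then in the cyclic order on the cycle I can find two $T_{\max}$-cyclically-adjacent vertices $t_1, t_2$ whose arc $\alpha$ between them contains an $S_{\min}$ vertex $s_1$, while the complementary arc $\beta$ contains another $S_{\min}$ vertex $s_2$ (this choice is possible because non-consecutivity means at least two $T_{\max}$-gaps contain $S_{\min}$ vertices). Since $t_1, t_2$ are $T_{\max}$-cyclically-adjacent, $\alpha$ touches $y = y^*$ only at its endpoints. Assuming WLOG that $x(t_1) < x(t_2)$ in $\Gamma$, the closed curve $\alpha \cup L$, where $L$ is the horizontal segment at $y = y^*$ from $t_1$ to $t_2$, is a simple Jordan curve bounding a region $R_\alpha$. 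Because $\alpha$ reaches $y_*$ only at isolated $S_{\min}$ vertices (on $\alpha$'s boundary), the interior of $R_\alpha$ has $y$-coordinate strictly above $y_*$.

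Now I claim that $\beta$ can lie neither inside nor outside $R_\alpha$, yielding the contradiction. It cannot lie inside $R_\alpha$: the vertex $s_2 \in \beta$ satisfies $y(s_2) = y_*$ and $s_2 \notin \alpha \cup L$, so $s_2$ would have to be in $R_\alpha$'s interior, contradicting the strict inequality $y > y_*$ there. It cannot lie entirely outside $R_\alpha$ either: the arcs $\alpha, \beta$ both stay in the horizontal strip $y_* \le y \le y^*$, are disjoint except at $t_1, t_2$, and both descend from $y^*$ all the way to $y_*$. The arc $\alpha$ achieves its minimum $y_*$ at $s_1$; a curve $\beta$ from $t_1$ to $t_2$ lying outside $R_\alpha$ and also reaching $y_*$ would have to pass underneath $\alpha$ near $s_1$ (impossible, since no cycle point has $y < y_*$ and $\beta$ cannot meet $\alpha$ except at $t_1, t_2$) or wrap around $\alpha$ to a region of $x$-values still bounded below at $y_*$ by the strip. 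Either way, the intermediate-value theorem on the $x$-coordinates of $\beta$ shows that $\beta$ must enter the interior of $R_\alpha$, contradicting that it lies outside.

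The main obstacle is handling the case where other $T_{\max}$ vertices lie on $\beta$ at $x$-coordinates in $(x(t_1), x(t_2))$, as these could cause the segment $L$ to cross the cycle. This is resolved by strengthening the choice of $t_1, t_2$ to be additionally $x$-adjacent among $T_{\max}$ in $\Gamma$, which is possible because the ``caps'' below $T_{\max}$ sinks at level $y^* - \epsilon$ must be $x$-disjoint by planarity of $\Gamma$; a case distinction on the $x$-order of $T_{\max}$ vertices then produces a pair that is simultaneously $T_{\max}$-cyclically-adjacent and $x$-adjacent, with both cycle-arcs still reaching $y_*$.
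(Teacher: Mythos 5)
Your ``only if'' direction is essentially a (much more detailed) version of the paper's argument: the paper simply picks $s_1,t_1,s_2,t_2$ interleaved around the cycle and observes that the four connecting paths, whose endpoints are extremal in $y$, must cross. Your Jordan-curve elaboration is in the right spirit, modulo the fiddly issues with the segment $L$ that you yourself flag.

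The ``if'' direction, however, has a genuine gap. Your augmentation---one super-source $s$ joined to \emph{every} source of $G$, one super-sink $t$ joined to every sink, plus the edge $(s,t)$---does not yield a planar $st$-graph in general, and, tellingly, it never actually uses the consecutivity of $T_{\max}$. If the argument were valid it would show that \emph{every} oriented cycle with a valid level assignment is level planar, contradicting the other direction. Concretely, take the $4$-cycle with sources $s_1,s_2$ and sinks $u_1,u_2$ in cyclic order $s_1,u_1,s_2,u_2$ and $\gamma(s_i)=0$, $\gamma(u_i)=1$: here $T_{\max}$ is not consecutive, yet your augmented graph is exactly $K_{3,3}$ (parts $\{s_1,s_2,t\}$ and $\{u_1,u_2,s\}$), so Lemma~\ref{lem:stGraphLevelPlanar} cannot be applied. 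The embedding you describe puts $s$ in one face of the cycle and $t$ in the other, so the mandatory edge $(s,t)$ cannot be routed; and since sources and sinks alternate around an oriented cycle, you cannot in general move all the $s$-edges and all the $t$-edges to the same side. The paper avoids this by being much more selective: it splits the cycle into the path $p_t$ spanning $T_{\max}$ and the complementary path $p_s$, adds a single new source $s$ below $p_s$ joined only to the sources \emph{on $p_s$}, cancels the sources on $p_t$ by edges from an existing minimal source $s_{\min}\in S_{\min}$, cancels the sinks on $p_s$ by edges into $t_1$ or $t_2$ (the endpoints of $p_t$), and adds a single new sink $t$ above $p_t$ joined only to the sinks on $p_t$. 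It is precisely the consecutivity of $S_{\min}$ and $T_{\max}$ that makes all of these cancellation edges simultaneously routable and level-consistent; your proof needs an analogous case analysis for the non-extremal sources and sinks rather than a single global star from $s$ and into $t$.
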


\begin{proof}
  For the ``if'' part, augment~$G$ to a planar~$st$-graph as follows.
  Let~$p_t$ be the oriented path of minimal length that contains all sinks in~$T_{\max}$ and no vertex in~$S_{\min}$, and let~$t_1, t_2 \in T_{\max}$ denote its endpoints.
  Let~$p_s$ be the oriented path from~$t_2$ to~$t_1$ so that~$p_s \cup p_t = G$ and~$p_s \cap p_t = \{t_1, t_2\}$.
  Draw~$p_t$ from left to right; see Fig.~\ref{fig:cycleLevelPlanar}.
  Below it, draw~$p_s$ from right to left.
  Fix some vertex~$s_{\min} \in S_{\min}$ and add an edge from~$s_{\min}$ to every source on the path~$p_t$.
  Add a new vertex~$s$ to~$G$, set~$\gamma(s) = \min\gamma - 1$ and add an edge from~$s$ to every source on the path~$p_s$.
  Thus,~$s$ is now the only source.
  Next, observe that any sink~$t_s$ on the path~$p_s$ is drawn to the left of~$s_{\min}$ or to the right of~$s_{\min}$.
  Add the edge~$(t_s, t_1)$ or the edge~$(t_s, t_2)$, respectively.
  Finally, add a new vertex~$t$ to~$G$, set~$\gamma(t) = \max\gamma + 1$ and add an edge from every sink on the path~$p_t$ to~$t$.
  Thus,~$t$ is now the only sink.
  All added edges~$(u,v)$ satisfy~$\gamma(u) < \gamma(v)$.
  Hence,~$G$ is now an~$st$-graph with a level assignment~$\gamma$ and so Lemma~\ref{lem:stGraphLevelPlanar} gives that~$G$ is level planar with respect to~$\gamma$.

  For the ``only if'' part, assume that~$T_{\max}$ is not consecutive.
  Then there are maximal sinks~$t_1, t_2 \in T_{\max}$ and minimal sources~$s_1, s_2 \in S_{\min}$ that appear in the order~$s_1, t_1, s_2, t_2$ around the cycle underlying~$G$.
  Because the chosen sinks and sources are highest and lowest vertices, respectively, the four edge-disjoint paths that connect them must intersect.
\end{proof}

\begin{figure}[t]
  \centering
  \includegraphics{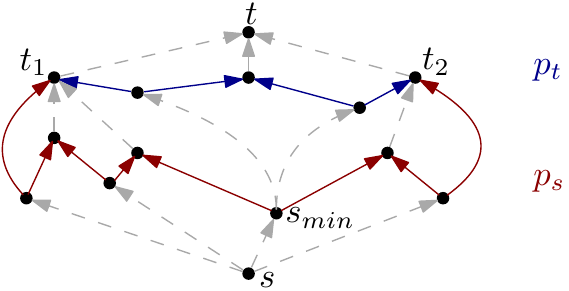}
  \caption{
    An~$st$-augmentation of an oriented cycle.
    The gray dashed edges are added for the~$st$-augmentation.
    The blue edges belong to path~$p_t$ and the red edges to path~$p_s$.
  }
  \label{fig:cycleLevelPlanar}
\end{figure}

Recall that any multilevel-planar drawing is a level-planar drawing with respect to some level assignment~$\gamma$.
Lemma~\ref{lem:cycleEquivalence} gives a necessary and sufficient condition for~$\gamma$ so that the drawing is level planar.
Given a multilevel assignment~$\ell$, we therefore find an induced separating level assignment~$\gamma$, or determine that no such level assignment exists.
It must be~$\ell(v) \neq \emptyset$ for all~$v \in V$; otherwise,~$G$ admits no multilevel drawing.
We find an induced level assignment~$\gamma$ that keeps the sets~$S_{\min}$ and~$T_{\max}$ as small as possible, because such a level assignment is, intuitively, most likely to be separating.
To this end, let~$S' \subset V$ denote the sources of~$G$ such that for~$s' \in S'$ we have~$\min\ell(s') = \min\ell$.
Further, let~$S'' \subseteq S'$ denote the sources of~$G$ such that for~$s'' \in S''$ we have~$\ell(s'') = \{\min\ell\}$.
Likewise, let~$T' \subset V$ denote the sinks of~$G$ such that for each~$t' \in T'$ it holds that~$\max\ell(t') = \max\ell$ let~$T'' \subseteq T'$ denote the sinks of~$G$ such that for~$t'' \in T''$ we have~$\ell(t'') = \{\max\ell\}$.

Suppose~$S'' \neq \emptyset$.
Observe that due to the multilevel assignment, all sources in~$S''$ have to be minimal sources.
Therefore, set~$S_{\min} = S''$.
Otherwise, if~$S'' = \emptyset$, pick any source~$s' \in S'$ and set~$S_{\min} = \{s'\}$.
Proceed analogously to find~$T_{\max}$.
If~$T'' \neq \emptyset$, set~$T_{\max} = T''$.
Otherwise, pick any sink~$t' \in T'$ and set~$T_{\max} = \{t'\}$.
Note that if~$S''$ or~$T''$ are not empty there is no choice but to add all sources or sinks in them to~$S_{\min}$ or~$T_{\max}$.
Otherwise~$S_{\min}$ or~$T_{\max}$ contains only one vertex, which guarantees that~$T_{\max}$ is consecutive.
Since~$\ell$ is in normal form, any remaining vertex can be assigned greedily to its minimum possible level above all its ancestors.
Hence,~$G$ is multilevel planar with respect to~$\ell$ if and only if~$T_{\max}$ is consecutive.
We conclude the following.%
\begin{theorem}
	\label{thm:cycle-testing}
  Let~$G$ be an oriented cycle together with a multilevel assignment~$\ell$.
  Then it can be decided in linear time whether~$G$ admits a drawing that is multilevel planar with respect to~$\ell$.
  Furthermore, if such a drawing exists, it can be computed within the same time.
\end{theorem}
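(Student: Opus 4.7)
The plan is to package the discussion preceding the theorem into an explicit algorithm and verify both correctness and running time. First, I would apply Lemma~\ref{lem:multiLevelAssignmentToNormalForm} to replace~$\ell$ by an equivalent normal-form assignment; if some~$\ell(v)$ becomes empty, reject, since then~$G$ cannot even carry an integer level assignment. From here on I may assume~$\ell$ is in normal form.

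Next, I would compute the four sets~$S'$,~$S''$,~$T'$,~$T''$ by a single scan of~$V$ and fix~$S_{\min}$ and~$T_{\max}$ according to the rule already stated in the paragraph preceding the theorem: use~$S''$ (respectively~$T''$) when non-empty, and otherwise a single arbitrary representative from~$S'$ (respectively~$T'$). Then I would walk once around the cycle to test whether~$T_{\max}$ is consecutive by checking that the cyclic pattern of~$S_{\min}$- and~$T_{\max}$-vertices does not interleave. If the test fails, reject. Otherwise, I would extend these choices to a separating level assignment~$\gamma$: place all of~$S_{\min}$ at level~$\min\ell$, all of~$T_{\max}$ at level~$\max\ell$, and assign every remaining vertex greedily to the smallest level in~$\ell(v)$ that is strictly above all its ancestors along the directed subpaths connecting consecutive sources and sinks. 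Normal form makes this always feasible and produces~$\gamma(u) < \gamma(v)$ for every edge~$(u,v)$. Lemma~\ref{lem:cycleEquivalence} then yields a level-planar, hence multilevel-planar, drawing.

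The main obstacle is the completeness direction: if the algorithm rejects in the consecutiveness step, I must rule out \emph{every} admissible~$\gamma$. The key observation is that each~$v \in S''$ is forced to lie at level~$\min\ell$, so~$S'' \subseteq S_{\min}(\gamma)$ for every admissible~$\gamma$, and symmetrically~$T'' \subseteq T_{\max}(\gamma)$. Rejection happens only when both~$S''$ and~$T''$ are non-empty and their vertices interleave around the cycle. But then~$S_{\min}(\gamma)$ and~$T_{\max}(\gamma)$ interleave as well for every admissible~$\gamma$, so Lemma~\ref{lem:cycleEquivalence} forbids level planarity. When~$S'' = \emptyset$ or~$T'' = \emptyset$, the single-representative fallback makes~$|S_{\min}| = 1$ or~$|T_{\max}| = 1$, in which case consecutiveness of~$T_{\max}$ is automatic (for any two sinks, one of the two arcs between them avoids the single chosen vertex), so the algorithm never rejects and always produces a valid~$\gamma$. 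For the running time, each step -- normal-form conversion, computing the four sets, the cyclic scan, the greedy level assignment, and the~$st$-augmentation from the proof of Lemma~\ref{lem:cycleEquivalence} -- runs in linear time, giving the claimed bound.
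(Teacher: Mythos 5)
Your proposal follows exactly the paper's argument: normalize $\ell$ via Lemma~\ref{lem:multiLevelAssignmentToNormalForm}, build $S_{\min}$ and $T_{\max}$ from the forced sets $S''$, $T''$ (or single representatives from $S'$, $T'$), test consecutiveness, and invoke Lemma~\ref{lem:cycleEquivalence} with a greedy extension to a full level assignment. In fact you spell out the completeness direction (that $S'' \subseteq S_{\min}(\gamma)$ and $T'' \subseteq T_{\max}(\gamma)$ for \emph{every} admissible $\gamma$, so an interleaving of $S''$ and $T''$ dooms all of them) more explicitly than the paper does, so this is correct and essentially the same proof.
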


\section{Hardness Results}
\label{sec:hardnessResults}

We now show that multilevel-planarity testing is~\textsf{NP}-complete even in very restricted cases, namely for~$sT$-graphs without a fixed embedding and for embedded multi-source graphs with at most two possible levels for each vertex.

\subsection{$sT$-Graphs with Variable Embedding}

In Section~\ref{sec:embeddedSingleSourceGraphs}, we showed that testing multilevel planarity of embedded~$sT$-graphs is feasible in linear time, because for every inner sink there is a unique sink switch to cancel it with.
We now show that dropping the requirement that the embedding is fixed makes multilevel-planarity testing \textsf{NP}-hard.
To this end, we reduce the \textsc{scheduling with release times and deadlines (Srtd)} problem, which is strongly \textsf{NP}-complete~\cite{GareyJohnson1977}, to multilevel-planarity testing.
An instance of this scheduling problem consists of a set of tasks~$T = \{t_1, \ldots, t_n\}$ with individual release times~$r_1, \ldots, r_n \in \mathbb N$, deadlines~$d_1, \ldots, d_n \in \mathbb N$ and processing times~$p_1, \ldots, p_n \in \mathbb N$ for each task (we assume~$0 \not\in \mathbb{N}$), where~$\sum_{i=1}^n p_i$ is bounded by a polynomial in~$n$.
See Fig.~\ref{fig:singleSourceNPComplete} (a) for an example.
The question is whether there is a non-preemptive schedule~$\sigma : T \to \mathbb N$, such that for each~$i \in \{1, \ldots, n\}$ we get
\begin{enumerate*}[label=(\arabic*)]
  \item~$\sigma(t_i) \geq r_i$, i.e., no task starts before its release time,
  \item~$\sigma(t_i) + p_i \leq d_i$, i.e., each task fi\-ni\-shes before its deadline, and
  \item~$\sigma(t_i) < \sigma(t_j) \implies \sigma(t_i) + p_i \leq \sigma(t_j)$ for any~$j \in \{1, \ldots, n\} \setminus \{i\}$, i.e., no two tasks are executed at the same time.
\end{enumerate*}

\begin{figure}[t]
  \centering
  \includegraphics{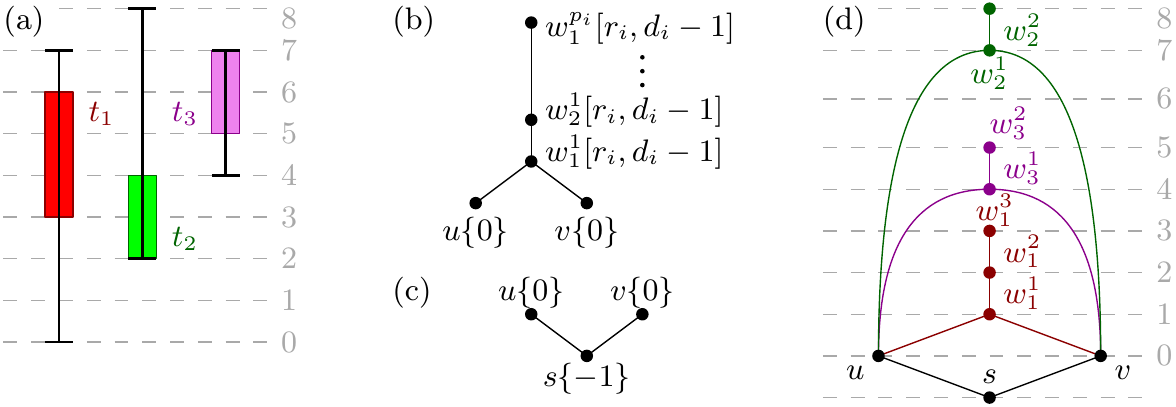}
  \caption{
    A task gadget~(b) for each task and one base gadget~(c) that provides the single source are used to turn a~\textsc{Srtd} instance~(a) into a multilevel-planarity testing instance~(d).
  }
  \label{fig:singleSourceNPComplete}
\end{figure}
Create for every task~$t_i \in T$ a \emph{task gadget}~$\mathcal T_i$ that consists of two vertices~$u$,~$v$ together with a directed path~$P_i = (w_i^1, w_i^2, \ldots, w_i^{p_i})$ of length~$p_i - 1$; see Fig.~\ref{fig:singleSourceNPComplete}~(b).
For each vertex~$w_i^j$ on~$P_i$ set~$\ell(w_i^j) = [r_i,d_i - 1]$, i.e., all possible points of time at which this task can be executed.
Set~$\ell(u) = \ell(v) = \{0\}$.
Join all task gadgets with a \emph{base gadget}.
The base gadget consists of three vertices~$s, u, v$ and two edges~$(s, u), (s, v)$, where~$u$ is placed to the left of~$v$; see Fig.~\ref{fig:singleSourceNPComplete}~(c).
Set~$\ell(s) = \{-1\}$ and, again, set~$\ell(u) = \ell(v) = \{0\}$.
Merge all gadgets at their common vertices~$u$ and~$v$; see Fig.~\ref{fig:singleSourceNPComplete}~(d).
Because \textsc{Srtd} is strongly \textsf{NP}-complete, the size of the resulting graph is polynomial in the size of the input.
Further, because the task gadgets may not intersect in a planar drawing and because they are merged at their common vertices~$u$ and~$v$, they are stacked on top of each other, inducing a valid schedule of the associated tasks. Contrasting linear-time tests of upward planarity and level planarity for~$sT$-graphs we conclude:%
\begin{theorem}
  \label{thm:singleSourceNPComplete}
  Let~$G$ be an~$sT$-graph together with a multilevel assignment~$\ell$.
  Testing whether~$G$ is multilevel planar with respect to~$\ell$ is \textsf{NP}-complete.
\end{theorem}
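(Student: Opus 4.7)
The plan is to prove NP-completeness by showing NP-membership and NP-hardness. NP-membership is inherited from Theorem~\ref{thm:multiLevelPlanarityNPComplete}: a polynomial-size certificate consists of a combinatorial embedding together with an integer $y$-coordinate in $\ell(v)$ for each vertex $v$, verifiable in polynomial time. For NP-hardness, I would use the reduction from strongly NP-complete \textsc{Srtd} described above; strong completeness ensures that $\sum_i p_i$ is polynomially bounded in the input size, making the construction polynomial. The resulting graph is an $sT$-graph with $s$ as its unique source, since every other vertex has an in-edge from $s$, $u$, $v$, or an earlier path vertex, and acyclicity follows from all edges pointing from lower to higher admissible levels.

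For correctness I would argue both directions of the equivalence. For the forward direction, starting from a valid schedule $\sigma$, I would order the tasks by $\sigma$-value and realize the task gadgets as nested $u$-$v$ arcs in the half-plane opposite $s$, with the earliest task being innermost. Each path $P_i$ is placed at the consecutive integer levels $\sigma(t_i), \sigma(t_i)+1, \ldots, \sigma(t_i)+p_i-1$, all of which lie in $[r_i,d_i-1]$ by validity of $\sigma$. Because the scheduled time intervals are pairwise disjoint, the $y$-intervals of the tasks do not interfere, so the nested configuration admits a crossing-free upward-planar realization.

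For the backward direction, observe that the task gadgets share only the vertices $u$ and $v$, so their arcs are $n$ internally-disjoint $u$-$v$ paths and in any planar embedding appear in a linear nesting order $\pi(1) \prec \pi(2) \prec \cdots \prec \pi(n)$ in the half-plane opposite $s$. From this nesting I would extract a schedule greedily: $\sigma(t_{\pi(1)}) := r_{\pi(1)}$ and $\sigma(t_{\pi(k)}) := \max(r_{\pi(k)},\, \sigma(t_{\pi(k-1)}) + p_{\pi(k-1)})$ for $k \ge 2$. This satisfies the release-time and non-overlap conditions by construction.

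The main obstacle is verifying that the greedy schedule always respects each deadline $d_{\pi(k)}$. The intuition is that the nested arcs, together with the $y$-monotonicity of edges and the $p_{\pi(k)}$ strictly-increasing integer levels of the $k$-th path, force the peak level $y(w_{\pi(k)}^{p_{\pi(k)}})$ to grow fast enough in $k$ to accommodate the schedule. I would prove this by induction along the nesting order, establishing $\sigma(t_{\pi(k)}) + p_{\pi(k)} - 1 \le y(w_{\pi(k)}^{p_{\pi(k)}}) \le d_{\pi(k)} - 1$. The inductive step exploits the fact that the outer arc $\pi(k)$ must route around the inner arc $\pi(k-1)$, which yields the structural separation of peak levels needed to carry the upper bound from $\pi(k-1)$ to $\pi(k)$ and close the argument.
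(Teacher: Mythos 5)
Your proposal uses the paper's own reduction (from strongly \textsf{NP}-complete \textsc{Srtd}, with the same task and base gadgets), and your forward direction matches the paper's: order the gadgets by $\sigma$, nest them with the earliest innermost, and place $P_i$ on consecutive levels inside $[r_i, d_i-1]$. The only real divergence is in the backward direction. The paper reads the schedule directly off the drawing, setting $\sigma(t_i) = \gamma(w_i^1)$ where $\gamma$ is the induced level assignment; then $\sigma(t_i) \ge r_i$ and $\sigma(t_i) + p_i \le d_i$ are immediate from $\ell(w_i^j) = [r_i, d_i - 1]$ combined with the $p_i$ strictly increasing levels along $P_i$, and the only nontrivial step is non-overlap, i.e., $\gamma(w_i^{p_i}) < \gamma(w_{i+1}^1)$ for consecutive gadgets in the nesting order. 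You instead define a greedy schedule from the nesting order and must then recover the deadline constraints by induction, which forces you to prove essentially that same inequality, $y(w_{\pi(k)}^1) > y\bigl(w_{\pi(k-1)}^{p_{\pi(k-1)}}\bigr)$, as the inductive engine; this works, but it is strictly more effort than the paper's direct extraction, under which release times and deadlines come for free. For the one structural fact both arguments hinge on, the paper has a concrete mechanism rather than the ``routing around'' intuition you invoke: by Lemma~\ref{lem:stAugmentation} the drawing extends to an upward-planar drawing of an $st$-supergraph, so the degree-one sink $w_i^{p_i}$ must be canceled at a higher vertex of its unique incident face, and the only candidate not in $\mathcal T_i$ is $w_{i+1}^1$. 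Your geometric claim can be made rigorous (e.g., a vertical ray from the tip of the inner dangling path must cross the enclosing outer arc, whose maximum height is $y(w_{\pi(k)}^1)$), but as written it is the one asserted-but-unproved step in your sketch, so you should either supply that argument or fall back on Lemma~\ref{lem:stAugmentation} as the paper does.
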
%
Using a very similar reduction one can also show \textsf{NP}-completeness of multilevel-planarity testing for trees.
\todo{Link to full proof and arxiv version here.}

\subsection{Embedded Multi-Source Graphs}

\begin{figure}[t]
  \centering
  \includegraphics{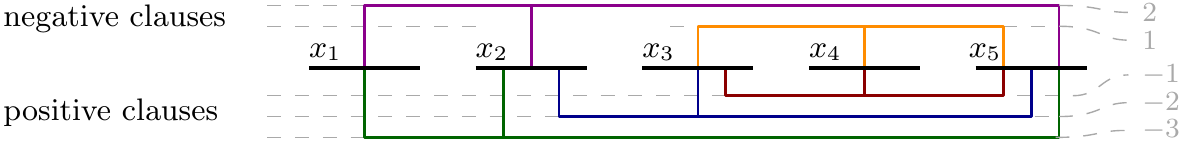}
  \caption{
      A rectilinear embedding of the planar monotone \textsc{3-Sat} instance $\textcolor{darkgreen}{(x_1 \lor x_2 \lor x_5)} \land \textcolor{darkblue}{(x_2 \lor x_3 \lor x_5)} \land \textcolor{darkred}{(x_3 \lor x_4 \lor x_5)} \land \textcolor{darkmagenta}{(\neg x_1 \lor \neg x_2 \lor \neg x_5)} \land \textcolor{darkorange}{(\neg x_3 \lor \neg x_4 \lor \neg x_5)}$.
  }
  \label{fig:planarMonotone3SATInstance}
\end{figure}
We show that multilevel-planarity testing for embedded directed graphs is \textsf{NP}-complete by reducing from \textsc{planar monotone 3-Sat}~\cite{DeBerg2012}.
An instance~$\mathcal I = (\mathcal V, \mathcal C)$ of this problem is a \textsc{3-Sat} instance with variables~$\mathcal V$, clauses~$\mathcal C$ and additional restrictions.
Namely, each clause is \emph{monotone}, i.e., it is either positive or negative, meaning that it consists of either only positive or only negative literals, respectively.
The \emph{variable-clause graph} of~$\mathcal I$ consists of the nodes~$\mathcal V \cup \mathcal C$ connected by an arc if one of the nodes is a variable and the other node is a clause that uses this variable.
The variable-clause graph can be drawn such that all variables lie on a horizontal straight line, positive and negative clauses are drawn as horizontal line segments with integer~$y$-coordinates below and above that line, respectively, and arcs connecting clauses and variables are drawn as non-intersecting vertical line segments; see Fig.~\ref{fig:planarMonotone3SATInstance}.
We call this a \emph{planar rectilinear embedding} of~$\mathcal I$.

Let~$\Gamma_{\mathcal I}$ be a planar rectilinear embedding of~$\mathcal I$.
Transform this into a multilevel-planarity testing instance by replacing each positive or negative clause of~$\mathcal I$ with a positive or negative clause gadget and merging them at common vertices.
Fig.~\ref{fig:embeddingNPComplete}~(a) shows the gadget for the positive clause~$(x_a \lor x_b \lor x_c)$.
The vertices~$x_a$,~$x_b$ and~$x_c$ are variables in~$\mathcal V$.
We call vertex~$p_i$ the \emph{pendulum}.
A variable~$x \in \mathcal V$ is set to true (false) if it lies on level 1 (level 0).
In a positive clause gadget~$p_i$ must lie on level 0, and so it forces one variable to lie on level 1, i.e., be set to true.
The gadget for a negative clause~$(\neg x_a \lor \neg x_b \lor \neg x_c)$ works symmetrically; its pendulum forces one variable to lie on level 0, i.e., be set to false; see Fig.~\ref{fig:embeddingNPComplete}~(b).
\begin{figure}[t]
  \centering
  \includegraphics{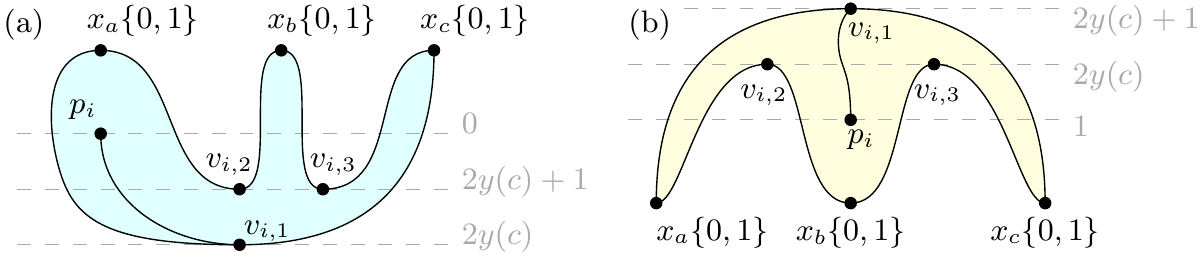}
  \caption{
      Gadgets for the clauses $(x_a \lor x_b \lor x_c)$~(a) and $(\neg x_a \lor \neg x_b \lor \neg x_c)$~(b).
  }
  \label{fig:embeddingNPComplete}
\end{figure}

\begin{theorem}
  \label{thm:embeddingNPComplete}
  Let~$G = (V, E)$ be an embedded directed graph together with a multilevel assignment~$\ell$.
  Testing whether~$G$ is multilevel planar is \textsf{NP}-complete, even if it is~$\rvert\ell(v)\rvert \le 2$ for all~$v \in V$.
\end{theorem}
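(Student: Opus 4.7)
The plan is to prove both NP-membership and NP-hardness for the restricted version. Membership is routine: the combinatorial embedding is part of the input, and a candidate drawing can be certified by an integer $y$-coordinate per vertex drawn from a polynomial range (tightened via Lemma~\ref{lem:multiLevelAssignmentToNormalForm}), after which strict $y$-monotonicity of edges and planarity of the straight-line drawing are checkable in polynomial time. The harder part is NP-hardness, for which I would complete the reduction from \textsc{planar monotone 3-SAT} sketched above. Starting from a rectilinear embedding $\Gamma_{\mathcal{I}}$ of the variable-clause graph, I would replace each clause by the appropriate positive or negative gadget from Fig.~\ref{fig:embeddingNPComplete}, identify the three literal endpoints of the gadget with the corresponding variable vertices of $\mathcal{V}$, and inherit the planar embedding from $\Gamma_{\mathcal{I}}$. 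Every variable vertex $x$ gets $\ell(x)=\{0,1\}$, with level $1$ interpreted as \emph{true} and $0$ as \emph{false}; the pendulum of a positive clause is pinned at a low level and that of a negative clause at a high level; every remaining auxiliary vertex gets a level set of size at most two configured to force the intended semantics. Since $|\mathcal{V}|+|\mathcal{C}|$ is polynomial and each gadget has constant size, the construction is polynomial and satisfies $|\ell(v)|\le 2$ throughout.

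For the forward direction, given a satisfying assignment $\phi$, I would place each variable vertex at level $\phi(x)$ and draw each gadget consistently with the fixed embedding. In a positive clause some literal is true, so the chain from its pendulum up to the corresponding variable vertex at level $1$ can be realized as a strictly $y$-monotone curve, while the remaining chains of the gadget can be routed within the allotted face without violating $y$-monotonicity or planarity; the symmetric construction works for negative clauses. Assembled globally, this yields a multilevel-planar drawing respecting $\ell$ and the prescribed embedding.

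The main obstacle is the reverse direction: any multilevel-planar drawing of the constructed graph must induce a satisfying truth assignment $\phi(x):=y(x)$. Here the fixed embedding is indispensable, because each pendulum $p_i$ and its three chains are confined to a specific face of the variable-clause skeleton determined by $\Gamma_{\mathcal{I}}$. For a positive clause $p_i$ is pinned to a low level, and I would argue by a combinatorial case distinction on how the three chains can enter that face that if all three variable endpoints of the clause were placed on level $0$, then no simultaneous strictly $y$-monotone routing of the three chains within the prescribed face respects the embedding, contradicting planarity. Hence at least one of the three variables lies on level $1$ and the clause is satisfied; the dual argument handles negative clauses. Combining both directions yields a polynomial-time reduction from \textsc{planar monotone 3-SAT}, which together with NP-membership establishes the theorem.
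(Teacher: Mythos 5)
Your proposal follows essentially the same route as the paper: a reduction from \textsc{planar monotone 3-SAT} using the rectilinear embedding, clause gadgets with a pendulum pinned low (positive clauses) or high (negative clauses), the level-$0$/level-$1$ encoding of truth values, and the same forward and reverse arguments (including the key step that the fixed embedding confines each pendulum's chains to one face, so not all three literals can be on the ``wrong'' level). The paper is equally terse about the forcing argument for the reverse direction, deferring the gadget mechanics to the figure, so your sketch matches it in both structure and level of detail.
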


\section{Conclusion}
\label{sec:conclusion}

\begin{table}[b]
  \begin{center}
  	\begin{tabular}{rcccccc}
  		\toprule

  		&
      \multicolumn{3}{c}{fixed combinatorial embedding} &
      \multicolumn{3}{c}{not embedded} \\

  		\cmidrule(r{3pt}){2-4}
      \cmidrule(l{3pt}){5-7}

  		&
      $st$-Graphs &
      $sT$-Graphs &
      arbitrary &
      Cycles &
      $sT$-Graphs &
      Trees \\

  		\midrule
  		\vspace{5pt}

  		Upward Planarity &
      $O(1)$~\cite{Bertolazzi1994} &
      $O(n)$~\cite{Bertolazzi1994} &
      \textsf{P}~\cite{Bertolazzi1994} &
      $O(n)$~\cite{Bertolazzi1998} &
      $O(n)$~\cite{Bertolazzi1998} &
      $O(1)$~\cite{DiBattista1998Book} \\

  		\vspace{5pt}

  		Multilevel Planarity &
      \makecell{$O(1)$ \\ (Cor.~\ref{cor:stGraphMultiLevelPlanar})} &
  		\makecell{$O(n)$ \\ (Thm.~\ref{thm:mlp-testing})} &
  		\makecell{\textsf{NPC}\\ (Thm.~\ref{thm:embeddingNPComplete})} & 
  		\makecell{$O(n)$ \\ (Thm.~\ref{thm:cycle-testing})} & 
  		\makecell{\textsf{NPC}\\ (Thm.~\ref{thm:singleSourceNPComplete})} &
  		\makecell{\textsf{NPC}\\ (Thm.~\ref{thm:trees-hard})} \\	
  		
  		Level Planarity &
      $O(1)$~\cite{Juenger1999} &
      $O(n)$~\cite{Juenger1999} &
      ? &
      $O(n)$~\cite{Juenger1999} &
      $O(n)$~\cite{Juenger1999} &
      $O(n)$~\cite{Juenger1999} \\

  		\bottomrule    
  	\end{tabular}
  \end{center}
\end{table}

In this paper we introduced and studied the multilevel-planarity testing problem.
It is a generalization of both upward-planarity testing and level-planarity testing.

We started by giving a linear-time algorithm to decide multilevel planarity of embedded~$sT$-graphs.
The proof of correctness of this algorithm uses insights from both upward planarity and level planarity.
In opposition to this result, we showed that deciding the multilevel planarity of~$sT$-graphs without a fixed embedding is \textsf{NP}-complete.
This also contrasts the situation for upward planarity and level planarity, both of which can be decided in linear time for such graphs.

We also gave a linear-time algorithm to decide multilevel planarity of oriented cycles, which is interesting because the existence of multiple sources makes many related problems \textsf{NP}-complete, e.g., testing upward planarity, partial level planarity or ordered level planarity.
This positive result is contrasted by the fact that multilevel-planarity testing is \textsf{NP}-complete for oriented trees.
Whether multilevel-planarity testing becomes tractable for trees with a given combinatorial embedding remains an open question.
We also showed that deciding multilevel planarity remains \textsf{NP}-complete for embedded multi-source graphs where each vertex is assigned either to exactly one level, or to one of two adjacent levels.
This result again contrasts the existence of efficient algorithms for testing upward planarity and level planarity of embedded multi-source graphs.
The following table summarizes our results.

\bibliographystyle{splncs04}
\bibliography{strings_short,references}

\newcommand{\bibdac}[2]{DAC'#2} \newcommand{\bibinvisau}[1]{Proc. Australian
  Symp. Inf. Vis. (invis.au #1)} \newcommand{\bibieeepdp}[2]{Proc. #1 IEEE
  Symp. Par. Distr. Process. #2} \newcommand{\bibieeecs}[1]{Proc. IEEE Int.
  Symp. Circ. Syst. #1} \newcommand{\bibcccg}[2]{CCCG'#2}
  \newcommand{\bibswat}[2]{SWAT'#2} \newcommand{\bibipco}[2]{IPCO'#2}
  \newcommand{\bibsofsem}[2]{SOFSEM'#2} \newcommand{\bibstoc}[2]{STOC'#2}
  \newcommand{\bibfocs}[2]{FOCS'#2} \newcommand{\bibsoda}[2]{SODA'#2}
  \newcommand{\bibgd}[2]{GD'#2} \newcommand{\bibinfovis}[1]{InfoVis'#1}
  \newcommand{\bibvis}[1]{Vis'#1} \newcommand{\bibpvis}[1]{PacificVis'#1}
  \newcommand{\bibsoftvis}[2]{SoftVis'#2} \newcommand{\bibeurocg}[2]{EuroCG'#2}
  \newcommand{\bibsocg}[2]{SoCG'#2} \newcommand{\bibwads}[2]{WADS'#2}
  \newcommand{\bibwg}[2]{WG'#2} \newcommand{\bibgta}{Proceedings of the
  Conference at Graph Theory and Applications}
  \newcommand{\bibisaac}[2]{ISAAC'#2} \newcommand{\bibcocoon}[2]{COCOON'#2}
  \newcommand{\bibtamc}[2]{TAMC'#2} \newcommand{\bibicalp}[2]{ICALP'#2}
  \newcommand{\biblatin}[2]{LATIN'#2} \newcommand{\bibesa}[2]{ESA'#2}
\begin{thebibliography}{10}
\providecommand{\url}[1]{\texttt{#1}}
\providecommand{\urlprefix}{URL }
\providecommand{\doi}[1]{https://doi.org/#1}

\bibitem{Angelini2016}
Angelini, P., Da~Lozzo, G., Di~Battista, G., Frati, F., Patrignani, M., Rutter,
  I.: Beyond level planarity. In: Hu, Y., N{\"o}llenburg, M. (eds.)
  \bibgd{24th}{16}. pp. 482--495. Springer (2016)

\bibitem{Angelini2015-proper}
Angelini, P., Da~Lozzo, G., Di~Battista, G., Frati, F., Roselli, V.: The
  importance of being proper (in clustered-level planarity and $t$-level
  planarity). Theoretical Comput. Sci.  \textbf{571}, ~1--9 (2015)

\bibitem{Angelini2015}
Angelini, P., Di~Battista, G., Frati, F., Jel\'{\i}nek, V., Kratochv\'{\i}l,
  J., Patrignani, M., Rutter, I.: Testing planarity of partially embedded
  graphs. ACM Trans. Alg.  \textbf{11}(4),  32:1--32:42 (2015)

\bibitem{Bachmaier2005}
Bachmaier, C., Brandenburg, F.J., Forster, M.: Radial level planarity testing
  and embedding in linear time. J. Graph Alg. Appl.  \textbf{9}(1),  53--97
  (2005)

\bibitem{Bertolazzi1994}
Bertolazzi, P., Di~Battista, G., Liotta, G., Mannino, C.: Upward drawings of
  triconnected digraphs. Algorithmica  \textbf{12}(6),  476--497 (1994)

\bibitem{Bertolazzi1998}
Bertolazzi, P., Di~Battista, G., Mannino, C., Tamassia, R.: Optimal upward
  planarity testing of single-source digraphs. SIAM J. Comput.  \textbf{27}(1),
   132--169 (1998)

\bibitem{Brueckner2017}
Br\"uckner, G., Rutter, I.: Partial and constrained level planarity. In: Klein,
  P.N. (ed.) \bibsoda{28th}{17}. pp. 2000--2011 (2017)

\bibitem{DeBerg2012}
De~Berg, M., Khosravi, A.: Optimal binary space partitions for segments in the
  plane. Int. J. Comput. Geom. \& Appl.  \textbf{22}(3),  187--205 (2012)

\bibitem{DiBattista1998Book}
Di~Battista, G., Eades, P., Tamassia, R., Tollis, I.G.: Graph Drawing:
  Algorithms for the Visualization of Graphs. Prentice Hall PTR, 1st edn.
  (1998)

\bibitem{DiBattista2008}
Di~Battista, G., Frati, F.: Efficient c-planarity testing for embedded flat
  clustered graphs with small faces. In: Hong, S.H., Nishizeki, T., Quan, W.
  (eds.) \bibgd{16th}{08}. pp. 291--302. Springer (2008)

\bibitem{DiBattista1988}
Di~Battista, G., Tamassia, R.: Algorithms for plane representations of acyclic
  digraphs. Theoretical Comput. Sci.  \textbf{61}(2),  175--198 (1988)

\bibitem{Forster2004}
Forster, M., Bachmaier, C.: Clustered level planarity. In: Van Emde~Boas, P.,
  Pokorn{\'y}, J., Bielikov{\'a}, M., {\v{S}}tuller, J. (eds.)
  \bibsofsem{30th}{04}. pp. 218--228. Springer (2004)

\bibitem{GareyJohnson1977}
Garey, M.R., Johnson, D.S.: Two-processor scheduling with start-times and
  deadlines. SIAM J. Comput.  \textbf{6}(3),  416--426 (1977)

\bibitem{Garg2002}
Garg, A., Tamassia, R.: On the computational complexity of upward and
  rectilinear planarity testing. SIAM J. Comput.  \textbf{31}(2),  601--625
  (2002)

\bibitem{Harrigan2008}
Harrigan, M., Healy, P.: Practical level planarity testing and layout with
  embedding constraints. In: Hong, S.H., Nishizeki, T., Quan, W. (eds.)
  \bibgd{16th}{08}. pp. 62--68. Springer (2008)

\bibitem{Jelinek2013}
Jel\'inek, V., Kratochv\'il, J., Rutter, I.: A {Kuratowski}-type theorem for
  planarity of partially embedded graphs. Comput. Geom.: Theory Appl.
  \textbf{46}(4),  466--492 (2013)

\bibitem{Jelinkova2009}
Jel\'inkov\'a, E., K\'ara, J., Kratochv\'il, J., Pergel, M., Such\'y, O.,
  Vysko\v{c}il, T.: Clustered planarity: Small clusters in cycles and
  {Eulerian} graphs. J. Graph Alg. Appl.  \textbf{13}(3),  379--422 (2009)

\bibitem{Juenger1999}
J{\"u}nger, M., Leipert, S.: Level planar embedding in linear time. In:
  Kratochv\'{\i}l, J. (ed.) \bibgd{7th}{99}. pp. 72--81. Springer (1999)

\bibitem{Klemz2017}
Klemz, B., Rote, G.: Ordered level planarity, geodesic planarity and
  bi-monotonicity. In: Frati, F., Ma, K.L. (eds.) \bibgd{25th}{17}. pp.
  440--453. Springer (2018)

\bibitem{Leipert1998}
Leipert, S.: Level Planarity Testing and Embedding in Linear Time. Ph.D.
  thesis, University of Cologne (1998)

\end{thebibliography}

\newpage
\appendix

\section{Omitted Parts from Section~\ref{sec:hardnessResults}}
\label{apx:hardness-results}

\subsection{Proof of Theorem~\ref{thm:singleSourceNPComplete}}

\begin{proof}
  The graph~$G$ is multilevel planar if and only if there is a valid one-processor schedule for the \textsc{Srtd} instance.
  To see this, start with a valid schedule~$\sigma$.
  Define a level assignment~$\gamma$ as follows.
  Set~$\gamma(s) = -1$,~$\gamma(u) = \gamma(v) = 0$.
  And for~$1 \leq i \leq n$ and~$1 \leq j \leq p_i$, set~$\gamma(w_i^j) = \sigma(t_i) + j$.
  Since~$\sigma$ is non-preemptive, it induces a total order on the tasks, without loss of generality~$\sigma(t_1) < \ldots < \sigma(t_n)$.
  Order the edges to the task gadgets~$\mathcal T_1, \mathcal T_2, \ldots, \mathcal T_n$ from right to left at~$u$, and from left to right at~$v$.
  Observe that any sink of~$G$ is the endpoint of a directed path of a task gadget.
  For~$1 \leq i < n$ cancel the sink~$w_i^{p_i}$ by connecting it to~$w_{i + 1}^1$.
  This is possible because the schedule is valid.
  Then Lemma~\ref{lem:stGraphLevelPlanar} gives that there exists a drawing of~$G$ that is level-planar with respect to~$\gamma$.
  Because it is~$\gamma(v) \in \ell(v)$ for all~$v \in V$ by construction,~$G$ is indeed multilevel planar with respect to~$\ell$.

  For the reverse direction, consider a drawing~$\Gamma$ of~$G$ that is multilevel planar with respect to~$\ell$.
  Let~$\gamma$ denote the level assignment induced by~$\Gamma$.
  Lemma~\ref{lem:stAugmentation} gives that~$G$ and~$\Gamma$ can be augmented to an~$st$-supergraph of~$G$ with a level-planar drawing~$\Gamma_{st}$ that extends~$\Gamma$.
  For~$1 \leq i < n$ this means that the sink~$w_i^{p_i}$ of~$G$, is canceled at some vertex~$x$.
  Because~$w_i^{p_i}$ has degree one, it is incident to only one face~$f$.
  Note that because~$w_i^{p_i}$ is the highest vertex of~$\mathcal T_i$, it cannot be canceled at a vertex that belongs to~$\mathcal T_i$.
  The only vertex incident to~$f$ that does not belong to~$\mathcal T_i$ is the vertex~$w_{i + 1}^1$.
  Therefore, it must be~$\gamma(w_i^{p_i}) < \gamma(w_{i + 1}^1)$.
  Now set~$\sigma(t_i) = \gamma(w_i^1)$ for~$1 \leq i \leq n$.
  By the argument we just made it is~$\sigma(t_i) + p_i < \sigma(t_{i + 1})$.
  Moreover,~$\sigma(t_i) \ge r_i$ and~$\sigma(t_i) + p_i \leq d_i$ is ensured by the multilevel assignment.
  Hence,~$\sigma$ is a valid schedule.
\end{proof}

\subsection{Oriented Trees}

We can show~\textsf{NP}-completeness of oriented trees with a very similar reduction as for $sT$-graphs without a fixed embedding.
The needed gadgets are only slightly different.

\begin{theorem}
	\label{thm:trees-hard}
  Let~$T$ be an oriented tree together with a multilevel assignment~$\ell$.
  Testing whether~$T$ is multilevel planar with respect to~$\ell$ is \textsf{NP}-complete.
\end{theorem}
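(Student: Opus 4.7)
The plan is to reduce from Scheduling with Release Times and Deadlines (SRTD), closely following the proof of Theorem~\ref{thm:singleSourceNPComplete}. Containment in \textsf{NP} is straightforward: a multilevel-planar drawing of an oriented tree can be certified by giving a valid level assignment~$\gamma : V \to \mathbb{Z}$ together with a combinatorial embedding, both of which allow multilevel planarity to be checked in polynomial time.

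For hardness, given an SRTD instance with tasks~$t_1, \ldots, t_n$ and parameters~$r_i, d_i, p_i$, I would build a task gadget~$\mathcal{T}_i$ around a directed path~$P_i = (w_i^1, \ldots, w_i^{p_i})$ with~$\ell(w_i^j) = [r_i, d_i - 1]$, exactly as in Theorem~\ref{thm:singleSourceNPComplete}. The ``slight difference'' lies in the base gadget and the way each task gadget is attached to it: instead of merging all task gadgets at two shared vertices~$u, v$ (which would create one cycle per task in the undirected underlying graph), each task gadget is attached at a single shared vertex, and the role of the~$v$-side in forcing vertical non-overlap of tasks is taken over by a small tree-shaped auxiliary structure appended to each task path. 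This auxiliary structure uses vertices with singleton level sets placed carefully relative to~$r_i$ and~$d_i$, so that in any upward-planar drawing the embedding freedom around the shared attachment vertex still orders the task gadgets vertically, preventing two task paths from sharing levels.

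The correspondence is then analogous to Theorem~\ref{thm:singleSourceNPComplete}: from a valid schedule~$\sigma$, I would define~$\gamma(w_i^j) = \sigma(t_i) + j$ and suitable levels for the auxiliary vertices, then invoke Lemma~\ref{lem:stGraphLevelPlanar} after an $sT$-augmentation (adding only cancellation edges between consecutive auxiliary structures) to obtain a multilevel-planar drawing. Conversely, from any multilevel-planar drawing I would read off~$\sigma(t_i) = \gamma(w_i^1) - 1$; the auxiliary structures' enforced vertical ordering implies that~$\sigma$ satisfies the non-overlap condition, and the intervals~$\ell(w_i^j)$ ensure the release-time and deadline conditions. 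The main obstacle I expect is the design of the tree-shaped auxiliary gadget itself: it must reproduce the non-overlap forcing of the $sT$-graph's shared~$u, v$ cycle while remaining acyclic in the undirected sense and without creating additional infeasibility in the multilevel assignment of the individual task gadget.
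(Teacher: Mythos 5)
Your overall strategy coincides with the paper's: reduce from \textsc{Srtd}, reuse the task paths~$P_i$ with~$\ell(w_i^j) = [r_i, d_i-1]$, attach every task gadget to the base at a \emph{single} shared vertex to avoid creating cycles, and delegate the non-overlap forcing to an auxiliary tree-shaped structure. Membership in \textsf{NP} and the two directions of the correspondence are handled the same way as in Theorem~\ref{thm:singleSourceNPComplete}, and those parts of your argument are fine.

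The gap is that you explicitly leave the auxiliary gadget unconstructed, and that gadget is the entire content of this theorem beyond Theorem~\ref{thm:singleSourceNPComplete}. Merging the task gadgets at one shared vertex does not by itself order them vertically: paths attached at a single vertex can be embedded in separate angular sectors around that vertex and then drawn side by side, occupying the same levels, so the non-overlap condition~(3) of \textsc{Srtd} is not enforced. Your proposed fix---singleton-level vertices ``placed carefully relative to~$r_i$ and~$d_i$''---is a statement of the desired effect, not a construction, and it is not obvious that it can be realized acyclically. The paper resolves exactly this point concretely: each task gadget carries two additional vertices~$a_i$ and~$b_i$ pinned to level~$0$, reached via edges~$e_{i,a}$ and~$e_{i,b}$, and the base gadget contains a long edge from~$c_1$ to~$c_2$ whose only purpose is to force the shared vertex~$u$ to lie to the left of all~$a_i$'s and~$b_i$'s on level~$0$. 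With~$u$ pinned to one side, the level-$0$ anchors~$a_i, b_i$ force the task gadgets to be nested inside one another in any multilevel-planar drawing, which recovers the total vertical order of tasks that the two-vertex merge provided in the single-source reduction. Until you exhibit such a gadget and verify the nesting argument, the reduction is incomplete.
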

\begin{proof}
  As in the proof for Theorem~\ref{thm:singleSourceNPComplete} we reduce from \textsc{Srtd}.
  Let~$T = \{t_1, \ldots, t_n\}$, $r_1, \ldots, r_n$, $d_1, \ldots, d_n$ and~$p_1, \ldots, p_n$ be such an instance with~$\sum_{i = 1}^n p$ bounded by a polynomial in~$n$.
    Again we initialize~$G$ with the base gadget shown in Fig.~\ref{fig:treeNPComplete}~(a) and for each task we add one task gadget as shown in Fig.~\ref{fig:treeNPComplete}~(b).
  In the base gadget we set~$d_{max} = \max_{i \in {1, \ldots, n}} d_i$ to be the maximum deadline among all tasks.
  The base and all task gadgets share a common vertex~$u$ at which they are merged.
  The long edge from~$c_1$ to~$c_2$ in the base gadget makes sure that~$u$ is left of all~$a_i$'s and~$b_i$'s on level~$0$.
  Further edges~$e_{i,a}$ and~$e_{i,b}$ force all task gadgets to be nested in each multilevel-planar drawing as in the single-source case in Theorem~\ref{thm:singleSourceNPComplete}.
  Because of this commonality we can adopt its proof to get that a valid schedule for the \textsc{Srtd} exists if and only if a multilevel-planar drawing for tree~$G$ exists.
  \begin{figure}[t]
      \centering
      \includegraphics{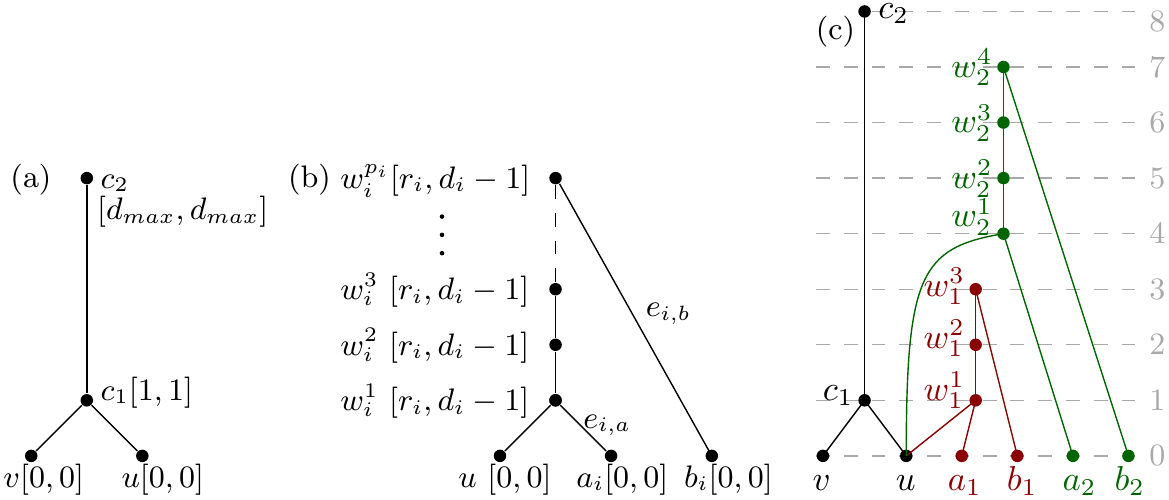}
      \caption{
          The base (a) and the task gadget (b) to transform a \textsc{Srtd} instance into a multilevel-planarity testing instance with~$G$ being a tree. An example (c) with two task gadgets (in red and green).
      }
      \label{fig:treeNPComplete}
  \end{figure}
\end{proof}

This also contrasts the results for upward planarity and level planarity, because every oriented tree is upward planar and all level graphs can be tested for level planarity in linear time.

\subsection{Proof of Theorem~\ref{thm:embeddingNPComplete}}

\begin{proof}
  Suppose that~$\varphi$ is a truth assignment of the variables~$\mathcal V$ that satisfies~$\mathcal C$.
  Construct a drawing~$\Gamma$ of~$G$ that is multilevel planar with respect to~$\ell$ by constructing a level assignment~$\gamma$ as follows.
  Let~$v \in \mathcal V$ be a variable.
  If~$v$ is set to true, set~$\gamma(v) = 1$.
  Otherwise, set~$\gamma(v) = 0$.
  Let~$c_i \in \mathcal C$ be a positive clause.
  Draw the pendulum~$p_i$ of~$c_i$ just below any vertex associated with a variable in~$c_i$ set to true.
  Because~$\varphi$ is a truth assignment that satisfies~$\mathcal C$, such a variable vertex exists.
  Now let~$c_j \in \mathcal C$ be a negative clause.
  Draw the pendulum~$p_j$ of~$c_j$ just above any vertex associated with a variable in~$c_j$ set to false.
  Since~$c_j$ is a negative clause, a positive literal in~$c_j$ corresponds to a variable set to false, and because~$\varphi$ is a truth assignment satisfying~$\mathcal C$, such a variable vertex exists.
  The resulting drawing is then level planar with respect to~$\gamma$ and therefore multilevel planar with respect to~$\ell$.

  Now assume that~$\Gamma$ is a drawing of~$G$ that is multilevel planar with respect to~$\ell$.
  Let~$\gamma$ denote the level assignment induced by~$\Gamma$.
  Construct a truth assignment~$\varphi$ as follows:
  Set the variable~$v \in \mathcal V$ to true or false depending on whether it is~$\gamma(v) = 1$ or~$\gamma(v) = 0$, respectively.
  Because it is~$\ell(v) = \{0, 1\}$, this always assigns a truth value to~$v$.
  Consider the pendulum~$p_i$ of a positive clause~$c_i \in \mathcal C$.
  In a positive gadget,~$p_i$ forces one of the variables in~$c_i$, say~$v$, to level~$1$, i.e.,~$v$ is set to true.
  Because~$c_i$ is a positive clause, it is then satisfied.
  In a negative gadget for a negative clause~$c_j \in \mathcal C$, pendulum~$p_j$ forces one of the variables in~$c_j$, say~$v'$, to level~$0$, i.e.,~$v'$ is set to false.
  Because~$c_j$ is a negative clause, it is then satisfied.
  This means that~$\varphi$ satisfies all clauses in~$\mathcal C$.
\end{proof}

\begin{figure}[t]
  \centering
  \includegraphics{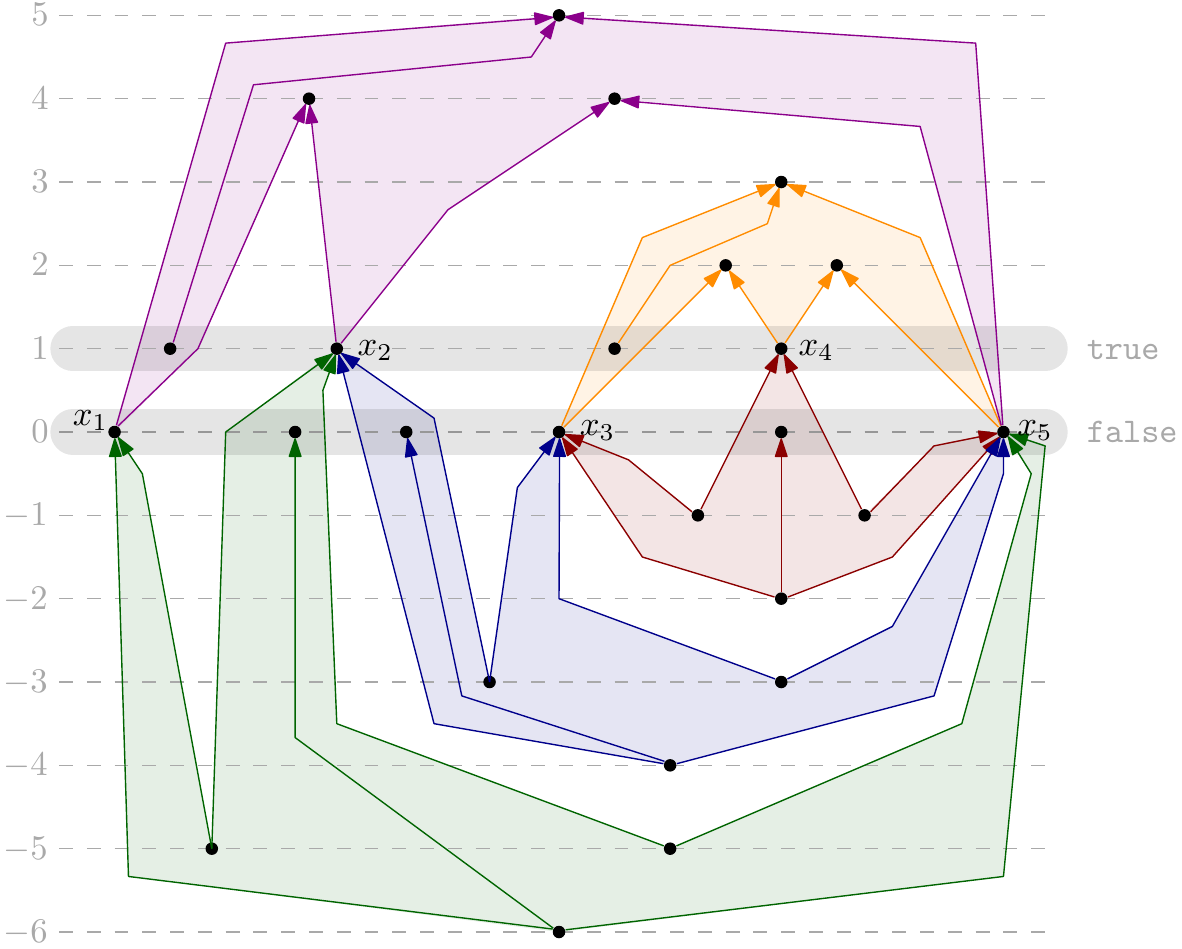}
  \caption{
    A multilevel-planar drawing of the graph constructed from the planar monotone \textsc{3-Sat} instance shown in Fig.~\ref{fig:planarMonotone3SATInstance}.
    The shaded faces correspond to the gadgets that substitute the clauses.
    In this multilevel-planar drawing, vertices~$x_1$,~$x_3$ and~$x_5$ are on level~$0$ so variables~$x_1 = x_3 = x_5 = \text{false}$.
    On the other hand vertices~$x_2$ and~$x_4$ are on level~$1$ so variables~$x_2 = x_4 = \text{true}$.
  }
  \label{fig:embeddingNPCompleteExample}
\end{figure}

\end{document}